\newtheorem{theorem}{Theorem}
\newtheorem{lemma}{Lemma}[section]
\newtheorem{definition}[lemma]{Definition}
\newtheorem{observation}[lemma]{Observation}
\newtheorem{claim}[lemma]{Claim}
\newtheorem{problem}[lemma]{Problem}
\crefname{lemma}{Lemma}{Lemmas}
\crefname{appendix}{Appendix}{Appendices}
\crefname{proposition}{Proposition}{Propositions}
\crefname{observation}{Observation}{Observations}
\crefname{claim}{Claim}{Claims}
\crefname{figure}{Figure}{Figures}
\title{Vantage Point Selection Algorithms for Bottleneck Capacity Estimation}
\author{Vikrant Ashvinkumar\thanks{Rutgers University. Email: {\tt va264@rutgers.edu}} \and Rezaul Chowdhury\thanks{Stony Brook University. Email: {\tt rezaul@cs.stonybrook.edu} Partially supported by NSF (CCF-2318633).} \and Jie Gao\thanks{Rutgers University.  Email: {\tt jg1555@rutgers.edu} Ashvinkumar and Gao are supported by NSF through IIS-2229876, DMS-2220271, DMS-2311064, CCF-2208663, CCF-2118953.} \and Mayank Goswami\thanks{Queens College, City University of New York. Email: {\tt mayank.goswami@qc.cuny.edu} Supported by NSF grant CCF-2503086.} \and Joseph S. B. Mitchell\thanks{Stony Brook University.  Email: {\tt joseph.mitchell@stonybrook.edu} Partially supported by NSF (CCF-2007275).} \and Valentin Polishchuk\thanks{Linkoping University. Email: {\tt valentin.polishchuk@gmail.com}}}
\date{}
\newcommand{\Ot}{\ensuremath{\widetilde{O}}}
\newcommand{\bracket}[1]{\left[#1\right]}
\newcommand{\paren}[1]{\ensuremath{\left(#1\right)}\xspace}
\newcommand{\card}[1]{\left\vert{#1}\right\vert}
\newcommand{\prob}[1]{\Pr\bracket{#1}}
\newcommand{\Exp}{\mathbb{E}}
\newcommand{\expect}[1]{\Exp\bracket{#1}}
\newcommand{\set}[1]{\ensuremath{\left\{ #1 \right\}}}
\newcommand{\VC}{\ensuremath{\mathsf{VertexCover}\xspace}}
\newcommand{\BD}{\ensuremath{\mathsf{BottleneckDiscovery}\xspace}}
\newcommand{\GoodL}{\ensuremath{\mathsf{CountGoodLabellings}}\xspace}
\newcommand{\BWL}{\ensuremath{\mathsf{CountBlackWhiteColorings}}\xspace}
\newcommand{\GoodLN}{\textsc{CountGoodLabellings}\xspace}
\newcommand{\BWLN}{\textsc{CountBlackWhiteColorings}\xspace}
\def\eps{\varepsilon}
\newcommand{\E}{\mathbb{E}}
\newenvironment{algorithm}
{\begin{tabular}{|l|}\hline\begin{minipage}{1in}\normalfont\begin{tabbing}
        \quad\=\qquad\=\qquad\=\qquad\=\qquad\=\qquad\=\qquad\=\qquad\=\qquad\=\qquad\=\qquad\=\qquad\=\qquad\=\kill}
{\end{tabbing}\end{minipage}\\\hline\end{tabular}}
\newenvironment{algo}
{\begin{center}\small\begin{algorithm}}
{\end{algorithm}\end{center}}
\begin{document}
\maketitle
\begin{abstract}
Motivated by the problem of estimating bottleneck capacities on the Internet, we formulate and study the problem of vantage point selection. We are given a graph $G=(V, E)$ whose edges $E$ have unknown capacity values that are to be discovered. Probes from a vantage point, i.e, a vertex $v \in V$, along shortest paths from $v$ to all other vertices, reveal bottleneck edge capacities along each path. Our goal is to select $k$ vantage points from $V$ that reveal the maximum number of bottleneck edge capacities. 

We consider both a non-adaptive setting where all $k$ vantage points are selected before any bottleneck capacity is revealed, and an adaptive setting where each vantage point selection instantly reveals bottleneck capacities along all shortest paths starting from that point.
In the non-adaptive setting, by considering a relaxed model where edge capacities are drawn from a random permutation (which still leaves the problem of maximizing the expected number of revealed edges NP-hard), we are able to give a $1-1/e$ approximate algorithm.
In the adaptive setting we work with the least permissive model where edge capacities are arbitrarily fixed but unknown.
We compare with the best solution for the particular input instance (i.e. by enumerating all choices of $k$ tuples), and provide both lower bounds on instance optimal approximation algorithms and upper bounds for trees and planar graphs.
\end{abstract}

\section{Introduction}

Network tomography~\cite{Vardi1996-cy, Castro2004-ll,Coates2002-xq} concerns the problem of inferring the internal topology and parameters of a network based on end-to-end measurements. For example, collecting IP, router, and provider-level network topologies has been an active research topic for more than 20 years. Long-term continuous measurement efforts such as CAIDA's Ark infrastructure~\cite{Archipelago} provide important information useful to many longitudinal analyses and network events of interest. One of the basic measurement tools is \textsc{TraceRoute}, which obtains the sequence of router interface IP addresses along the forward path to a destination by sending probe packets with varying time to live (TTL) values and examining the Internet Control Message Protocol (ICMP) responses. By using the transmission timestamp of each probe, \textsc{TraceRoute} can report the round trip time from the source to \emph{each} node on the forward path. Although \textsc{TraceRoute} was initially designed for network administrators to make diagnoses on a small scale, recent efforts applied \textsc{TraceRoute} for Internet scale topology probing~\cite{Huang2020-uy, Beverly2016-jn}. 

Beyond network topology, estimating network bottleneck capacity is a classical research topic~\cite{Allman1999-ex, Guerrero2010-bx, Banerjee2000-cn, Harfoush2003-mw}. The capacity of a link (edge) is the highest bit rate possible to transmit over the edge. The capacity of a path is determined by the \emph{bottleneck} edge, the one on the path with the lowest capacity~\cite{Prasad2003-sv}. Estimating the bottleneck or available capacity between a pair of vertices is useful for many applications, such as routing management, intrusion detection and improving performance of transport protocols. In particular, knowledge of network bandwidth can be instrumental to client-side optimization in real-time throughput sensitive applications, such as video conferencing. 
A list of existing measurement tools for available bandwidth estimation is summarized in~\cite{Salcedo2018-kx}.
Existing techniques mostly focus on end-to-end capacity, using tools such as variable packet size, packet pair/train dispersion, or periodic streams. Common to these different implementations is the central idea of examining the \emph{jumps} in round trip time (RTT) from the source to each hop of a path. Since the bottleneck edge determines the available capacity of the entire path, the bottleneck capacity can be discovered by examining packet latency in probes.

The common practice in large scale measurements of the Internet issues probes from a narrow range of vantage points (VPs), typically cloud systems or research universities, due to cost, convenience, accessibility and scalability~\cite{Jueckstock2019-ym}. For example, RIPE Atlas~\cite{RIPE-Atlas}, a platform that supports user specified measurement requests, has over $9600$ active measurement vantage points. But a user is limited to the total number of vantage points used in a measurement. Therefore, a user would naturally seek to maximize the amount of information one could obtain by optimizing the choice of vantage points~\cite{Holterbach2017-ec}. 

In this paper we formulate mathematically the problem of choosing vantage points for network probing. Our goal is to use probing tools for network-wide capacity discovery. 
We consider a network with publicly known topology, where each edge/link has an unknown capacity. We would like to select (possibly from a given set of vertices) a set of $k$ vantage points from which probing messages are sent out to other vertices in the network.
We use a model of probing in which probing from a source $s$ along a path to a destination $t$ reveals the bottleneck capacity of the sub-path from $s$ to $v$, for every intermediate vertex $v$ on the path. In this model, if the bottleneck capacity drops from $C_1$ to $C_2$, in comparing paths $P(s, v)$ and $P(s, v')$ with $v'$ being the immediate downstream vertex after $v$, then the edge $(v, v')$ has capacity $C_2$.
Our objective is to reveal the link capacity for as many edges in the network as possible. 

In this paper we assume that the network topology is known, e.g., learned using the relatively mature methods for topology discovery. We also work mainly with an assumption that all paths along the probings are unique shortest paths. Thus the set of shortest paths from one vertex to all other vertices in the network forms a shortest path tree. To examine the connection to reality, we first remark that modern Internet uses multi-path routing, where traffic to a destination can be spread on multiple paths for throughput and redundancy~\cite{Augustin2007-pf, Vermeulen_undated-sj}. On the other hand, the assumption that routes originating from a vantage point to the Internet form a tree-like structure is heavily exploited to reduce probing redundancy in modern \textsc{TraceRoute} tools, e.g., Doubletree~\cite{Donnet2005-pe} and its variants~\cite{Donnet2007-dq, Li2008-eq}. Therefore it is reasonable to assume that using existing tools the routes probed starting from a vantage point indeed form a tree. 

There has been a number of empirical work understanding  vantage points on the Internet in terms of their characteristics and influence on data collected for network probing~\cite{Burger2014-hb, Bottger2017-oc, Lichtblau2021-tr, Shavitt2011-wk}. None of the prior work is directly related to our problem formulation. 

The problem of vantage point selection is not limited to the Internet domain. There are a number of other network scenarios where estimating bottleneck on the network is important. This problem is natural for transportation network for estimating traffic bottleneck although probing on the transportation network may have to rely on opportunistic inputs. 
In the traffic engineering literature, the traffic sensor location problem (TSLP)~\cite{OWAIS2022118134} is to determine how many sensors are required and where should they be deployed in order to best understand the traffic bottlenecks in road and transportation networks. 
For a blood vessel network, a medical procedure known as angiography considers probing of the blood vessel network of a patient by a dye injected into the bloodstream through a catheter, in order to determine the blocked vessel (the bottleneck). On an abstract level, determining the site for the injection(s) is similar to the bottleneck discovery problem in this paper. 

\subsection{Problem Definition}\label{sec:overview}

Here we formulate the problem of vantage point selection for bottleneck capacity discovery.
We are given an undirected graph $G=(V, E)$ with each edge  associated with an \emph{unknown} positive capacity $c(\cdot)$.
The goal is to discover these unknown capacities by using bottleneck queries from vantage points $u$, which reveals bottlenecks on the shortest path between $u$ and $v$ for all $v \in V$.
The shortest path may be defined by the path of minimum number of hops or by another weight metric which is known (the only unknown is the capacities).
Unless mentioned otherwise, we will assume (perhaps with slight perturbation of the edge weights) that all shortest paths are unique.
The problem is then to select a set of $k$ vertices $S$ as vantage points such that when we issue queries from each vantage point, we can reveal a maximum number of the unknown edge capacity values.

We first discuss what edges are revealed from selecting $S$ as a set of vantage points.

\begin{definition}[Bottleneck Edges]
\label{def:bn}
    Denote by $P(s,t)$ the shortest path from $s$ to $t$, and let $e_1, e_2, \ldots e_\ell$ be the edges along this path.
    We say that, for $i \in [\ell]$, $e_i$ is a \emph{bottleneck edge} along $P(s,t)$ if $c(e_i) < c(e_j)$ for all $j \in [\ell] \setminus \set{i}$, and $c(e_i)$ is a \emph{bottleneck capacity}.
    The bottleneck edges from a set $S$ to $V$ are edges that are bottleneck along $P(s,t)$ for some $s \in S$ and some $t \in V$.
\end{definition}

Note that while we state the definition of a bottleneck edge with respect to a path $P(s,t)$, in accord with the standard definition of a bottleneck edge,
we extend the definition to a set $S$ to match our query model:
for every $s \in S$, a single query comprises of $n-1$ subqueries finding the bottleneck edge in $P(s,t)$ for all $t \in V \setminus \set{s}$.
For example, on the path graph $v_0, v_1, \ldots, v_n$ where edges have monotone increasing capacities, querying $v_0$ results in the revelation of all capacities whereas querying only reveals the capacities of the last edge.

With this background, we can loosely define the problem of bottleneck discovery.
\begin{problem}[$\BD$]
    Given a non-negatively weighted (weights $w$) undirected graph $G=(V,E,w)$, and a positive integer $k$, find a subset of vertices $S \subseteq V$ with $|S| = k$ such that the number of revealed bottleneck edges is maximized.
\end{problem}

\smallskip\noindent\textbf{Non-Adaptive Setting. } In this setting, we work with the assumption that the ordering of the edge capacities is uniformly randomly chosen from a permutation of $1, 2, \ldots, m=|E|$. Thus, each edge has equal chance of being the one with the smallest capacity among all edges in the network.  
Here we consider algorithms in a non-adaptive manner, where the $k$ vantage points are selected all at once and probes are issued afterwards. We seek to maximize the \emph{expected} number of edges revealed.

\smallskip\noindent\textbf{Adaptive Setting. }
In the adaptive setting, we work in the worst-case model where we assume the edge capacities to be arbitrarily fixed distinct real numbers, unknown to the algorithm. We also allow algorithms that run in an adaptive manner, where the vantage points are selected one by one, with probes from a vantage point issued immediately upon selection, before we select the next vantage point. 

Here we aim to perform well on any specific input.
In other words, we study the \textit{instance-optimal} setting. Traditionally, instance optimality is studied by comparing the output of an algorithm to the optimal solution for the input instance. In this spirit, we assume an algorithm $OPT$ that is already aware of all the capacities, and for a specific input instance $I$, let $OPT(I)$ denote the maximum number of edges revealed by $k$ vantage points on instance $I$, i.e., obtained by enumerating all subsets of $k$ vantage points and keeping the best choice. 
In contrast, a capacity-unaware algorithm will be called \emph{$(\alpha,\beta)$-instance optimal} if after selecting at most $\alpha k$ vantage points (possibly in an adaptive manner) on the input instance $I$, it can reveal at least $OPT(I)/\beta$ many edge capacities. We can think of $\alpha\geq 1$ as the resource augmentation factor and $\beta \geq 1$ as the approximation factor for the objective function. 

We close this subsection with the following observations on the problem.
\begin{observation}
\label{lem:subpath}
Let $T_s$ be the shortest path tree rooted at a vantage point $s \in S$.
\begin{itemize}
    \item The capacities of all edges in $T_s$ that are incident to $s$ are revealed.
    \item In any root to leaf path from $s$ to $t$ on $T_s$, the capacities of the edges revealed form a record setting decreasing subsequence, where the capacity of the $i$th revealed edge is strictly smaller than all edges earlier on the path (and of course smaller than that of the $(i-1)$st revealed edge). 
    \item If $u$ is on the shortest path from $s$ to $w$, all edges that $s$ can reveal along the path from $u$ to $w$ can instead be revealed if $u$ is selected as a vantage point.
\end{itemize}
\end{observation}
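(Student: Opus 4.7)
The observation bundles three facts of increasing depth, so my plan is to dispatch them in order, reusing earlier parts to prove later ones.

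For the first bullet, the plan is to note that for any neighbor $v$ of $s$ in $T_s$, the path $P(s,v)$ consists of the single edge $(s,v)$. Bottleneck Discovery on a one-edge path trivially reveals the capacity of that edge, since there is nothing for it to be compared against. This part requires no more than unwinding \Cref{def:bn}.

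For the second bullet, I would walk along a root-to-leaf path $P(s,t) = e_1, e_2, \ldots, e_\ell$ from $s$ outward and track the running minimum $b_j = \min_{i \le j} c(e_i)$, which is exactly the bottleneck capacity of the prefix $P(s, v_j)$ where $v_j$ is the $j$th vertex on the path. The sequence $b_1 \ge b_2 \ge \cdots \ge b_\ell$ is non-increasing, and by the query model, an edge $e_j$ is revealed precisely when $b_j < b_{j-1}$, i.e., when $c(e_j)$ strictly beats every earlier capacity on the path. This is exactly the record-setting property; the revealed capacities form a strictly decreasing subsequence by construction.

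The third bullet is the one that does real work, and my plan is to deduce it from the record-setting property of the second bullet combined with the optimal-substructure property of shortest paths. Since $u$ lies on $P(s,w)$ and shortest paths are unique, optimal substructure gives $P(s,w) = P(s,u) \cdot P(u,w)$; in particular the subpath $P(u,w)$ is itself a shortest path. Now fix any edge $e$ on $P(u,w)$ that is revealed from $s$, and let $w'$ be the endpoint of $e$ lying further from $s$ along $P(s,w)$. The record-setting property applied to $P(s,w')$ gives $c(e) < c(e')$ for every edge $e'$ on $P(s,w')$ preceding $e$; in particular this holds for every edge of $P(u,w')$ preceding $e$, which is a suffix of the list. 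Hence, when we probe from $u$, the running bottleneck along $P(u,w)$ strictly drops at $e$, so $u$ also reveals $e$.

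The only real obstacle I anticipate is being careful in the third bullet about the distinction between ``edge $e$ is revealed along the path from $s$ to $w$'' versus ``$e$ appears as the bottleneck for some other destination reachable through $P(s,w)$''; once one fixes $w'$ as the far endpoint of $e$ on the $s$-side-to-$w$-side direction, the argument reduces cleanly to comparing prefixes of the two nested shortest paths, and uniqueness of shortest paths keeps everything consistent.
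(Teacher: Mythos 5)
Your proof is correct, and since the paper states \Cref{lem:subpath} without proof (treating all three bullets as immediate consequences of the query model in \Cref{def:bn}), your write-up simply fills in the details along the only natural route: record-setting prefixes plus optimal substructure and uniqueness of shortest paths. The care you take in the third bullet — fixing the far endpoint $w'$ of $e$ and comparing the prefix $P(u,w')$ against $P(s,w')$ — is exactly the right way to make the suffix argument airtight.
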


\subsection{Overview of Results} 

\smallskip\noindent\textbf{Non-Adaptive Setting (with Stochastic Capacities). }
We start with the nonadaptive setting with stochastic capacity assumption, where $k$ vantage points are selected before any information on the bottleneck capacities is revealed.
Here, the ordering of the capacities $c$ is assumed to be a random permutation of $[m]$, and we look for an algorithm that maximizes the expected performance.
This problem is NP-hard, since the vertex cover problem is a special case.
We therefore aim for approximation algorithms and compare with the optimal algorithm which works under the same assumption. 

A first observation is that the expected number of revealed edges obtained using a subset $S$ of vantage points is a monotone and submodular function with respect to $S$. That is, adding a new vantage point $w$ is always beneficial and adding $w$ to a set $Y$ rather than $X$, with $X\subseteq Y$, has a diminishing return. This observation allows us to use a greedy algorithm to select the next vantage point that maximizes the expected marginal increase of the objective function, which yields a $(1-1/e)$-approximation algorithm, by known results for maximizing coverage for submodular functions \cite{Nemhauser1976-ku}. To obtain our approximation result, then, we must address the non-trivial calculation of the expected marginal increase of adding a vantage point to the set being constructed greedily. 
A crucial step here is to efficiently count the number of capacity assignments to the graph such that an edge $e$ is revealed as the result of a new vantage point. 
We show that the algorithm can be implemented in time $\Ot(kmn^3)$, with $n$ as the number of vertices, $m$ as the number of edges and $k$ as the total number of vantage points selected.

We remark that one can also resort to a sampling-based approach (e.g., see Proposition~3 in  \cite{karimi2017stochastic}), which does not provide better running time in general and can only provide an approximate solution which leads to a $1-1/e-\eps$ approximation factor.
Our deterministic algorithm provides more insight into the structure of the problem.

\smallskip\noindent\textbf{Adaptive Setting (with Worst-Case Capacities). }
In the adaptive setting, we prove the following results.
First, we show interesting lower bounds for the tradeoffs of parameters $\alpha, \beta$ and $k$ for $(\alpha, \beta)$-instance optimal algorithms, where $OPT$ selects $k$ vantage points (with the full knowledge of capacities). 
One trivial algorithm is to select all $n$ vertices as vantage points, as opposed to the $k$ vertices selected by $OPT$, obtaining $\alpha=n/k$ and $\beta =1$. We thus ask if we can lower $\alpha$ with respect to the other parameters. Our lower bounds suggest that $\alpha$ cannot be too small. 
For deterministic adaptive algorithms, we show that $\alpha^2 \beta = \Omega(n/k)$; 
for randomized adaptive ones, $\alpha \beta = \widetilde{\Omega}(\sqrt{n/k})$. These lower bounds apply for algorithms that can perform any finite computation between the selection of the $k$ vintage points. 

We then show that both bounds are tight when the underlying graph is a tree, i.e., we give vantage point selection algorithm achieving the same tradeoffs. For planar graphs, we give a deterministic algorithm that achieves a tradeoff of $\alpha\beta=O((n/k)^{2/3})$. 
Finally, while we leave open the question of existence of an $\alpha\beta=o(n)$ algorithm for general graphs, we show that if the shortest path trees are not unique, and each shortest path tree may break ties independent from others, there is a stronger lower bound of $\alpha\beta=\Omega(n^{1-\eps})$, for all $\eps>0$. 

\medskip
We present the algorithmic results in the main body of the paper. We postpone the hardness proof for both settings and some proof details to the Appendix. 

\section{$\BD$ in the Non-Adaptive Setting (with Stochastic Capacities)}
\label{sec:non-adaptive}

In this section we show a constant approximation to $\BD$ in the non-adaptive setting with stochastic capacities, which is NP-hard (\Cref{app:iohardness}).

\begin{theorem}
\label{thm:non-adaptive-main}
    There is an algorithm that gives a $(1 - 1/e)$ approximation
    to $\BD$ in time $\Ot(kmn^3)$, where $e$ is the base of the natural logarithm.
\end{theorem}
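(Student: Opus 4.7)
The plan is to apply the classical greedy algorithm of Nemhauser, Wolsey, and Fisher for cardinality-constrained monotone submodular maximization, with the main technical work lying in evaluating marginal gains efficiently under the random-permutation model.

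First, I would verify that $f(S) := \Exp[\text{number of edges revealed by } S]$ is monotone and submodular. For any fixed capacity assignment, \Cref{def:bn} immediately yields that the set of edges revealed by $S$ is the union over $s \in S$ of the edges revealed by the single-vantage set $\{s\}$; thus this set function is a coverage function, which is monotone submodular. Taking expectations preserves both properties, so $f$ is monotone submodular. The $(1 - 1/e)$ approximation then follows from the standard greedy rule of picking, at each of the $k$ iterations, a vantage point $w$ maximizing the marginal gain $\Delta(w \mid S) := f(S \cup \{w\}) - f(S)$.

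The crux is evaluating $\Delta(w \mid S)$ efficiently. By linearity,
\[
\Delta(w \mid S) = \sum_{e \in E} \Pr\bigl[e \text{ revealed by } w \text{ but not by } S\bigr],
\]
and by the third bullet of \Cref{lem:subpath} the summand is zero unless $e \in T_w$. For such $e$, let $P_w$ denote the path in $T_w$ from $w$ to the deeper endpoint of $e$, and for each $s \in S$ with $e \in T_s$, let $P_s$ denote the analogous path in $T_s$. Then $e$ is newly revealed by $w$ iff (a) $c(e)$ is the minimum of $\{c(e') : e' \in P_w\}$, and (b) for every such $s$, some $e' \in P_s \setminus \{e\}$ satisfies $c(e') < c(e)$. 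I would then derive a closed form for this probability by counting, among uniformly random permutations of the edges in $P_w \cup \bigcup_s P_s$, those satisfying (a) and (b). The plan is to condition on the rank of $c(e)$ within this union and exploit the structural observation that, under unique shortest paths, each $P_s$ shares a suffix with $P_w$ ending at $e$; hence $P_s \setminus P_w$ is a sub-path of $T_s$ ending at the vertex $z_s$ where $P_s$ meets $P_w$. This structure is what allows the otherwise exponential inclusion--exclusion over subsets of $S$ to collapse into $\Ot(n^2)$ distinct terms per edge~$e$.

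For the running time, greedy performs $k$ iterations; each iteration tests each of the $n$ candidate vertices $w$, summing probabilities over $m$ edges at $\Ot(n^2)$ cost apiece, for a total of $\Ot(kmn^3)$; the one-time $\Ot(mn)$ cost of computing all $n$ shortest-path trees is absorbed. The main obstacle is precisely the counting step: the brute-force inclusion--exclusion over subsets of $S$ is exponential, and the heart of the proof is showing that the special structure of the competing paths $P_s$---all terminating at the common edge $e$ and meeting $P_w$ at specific divergence vertices $z_s$---collapses the sum to polynomially many distinct terms in an efficiently computable way.
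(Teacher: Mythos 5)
Your overall framework---greedy selection on the monotone submodular function $f$, with the technical work concentrated in evaluating per-edge reveal probabilities under the random-permutation model---is exactly the paper's, and your submodularity argument via coverage functions is a clean (arguably cleaner) substitute for the paper's indicator-variable computation. Your characterization of the event ``$e$ is newly revealed by $w$'' is also correct, and the union of the competing paths $P_w, P_s$ is indeed a tree rooted at $e$ (the paper's $T_e$).

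However, the central step is missing. You correctly flag the counting as the heart of the proof, but you only \emph{assert} that conditioning on the rank of $c(e)$ together with the suffix-sharing structure ``collapses the inclusion--exclusion to $\Ot(n^2)$ terms''; no mechanism is exhibited, and suffix-sharing alone does not make a direct inclusion--exclusion over subsets of $S$ tractable, since the atoms are intersections of events of the form ``$P_s$ contains no edge cheaper than $e$'' and there are a priori exponentially many distinct such intersections. The paper's device is complementation plus a tree dynamic program: after conditioning on the number $t$ of edges of $T_e$ cheaper than $e$ (your ``rank''), it counts black/white colorings of the tree in which \emph{every} root-to-leaf path contains a black (cheaper) edge. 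This complemented event factorizes over the children of a white root, yielding a bottom-up recursion whose transition is the product $\prod_u P_u(x)$ of the children's generating polynomials plus a term $x(1+x)^{|V(T')|-1}$ for the black-root case; the resulting coefficients $B[t]$ are converted to permutation counts via $\sum_t \bigl(\binom{N}{t}-B[t]\bigr)(N-1-t)!\,t!$, and fast polynomial multiplication gives the $\Ot(n^2)$ per-edge cost. Without this (or an equivalent counting device), the claimed per-edge cost, and hence the $\Ot(kmn^3)$ total running time, is unsupported.
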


Define $R(S)$ to be the set of edges that are revealed with vertices $S$ selected, and let $f(S) = \expect{|R(S)|}$.
We prove \Cref{thm:non-adaptive-main} in two parts.
First, we show that $f$ is monotone submodular, and hence a greedy algorithm for selection into $S$ yields a $(1 - 1/e)$ approximation by the result of \cite{Nemhauser1976-ku}. This is the less interesting part and so the details are defered to \Cref{subsec:submodularity}.
We then design a value oracle for each iteration of the greedy algorithm, which is implemented by using polynomial multiplications.

\smallskip\noindent\textbf{Setup. }
Recall that $R(S)$ is the set of edges whose capacities are revealed from selecting vantage points $S$, and we would like to select vertex $w$ to maximize $f(S \cup \set{w})$. 
By linearity of expectation, $f(S \cup \set{w}) = \sum_e \prob{e \in R(S \cup \set{w})}$.
It is thus enough to show how to compute $\prob{e \in R(S \cup \set{w})}$ for an arbitrary subset $S \subseteq V$, after which we can sum over all the edges at the expense of a multiplicative $m$ factor blowup in the running time.

\smallskip\noindent\textbf{(Overview) Computing $\prob{e \in R(S)}$. }
For the remainder of this section, let us fix an edge $e = (u,v)$ for which we want to compute $p = \prob{e \in R(S)}$, and let us also have $S$ fixed.
First, we show what vertices $s \in S$ contribute to $p$.

\begin{definition}[(Non-)essential vantage points]
\label{def:ess}
    A vantage point $s \in S$ is said to be \emph{non-essential} if at least one of the following holds:
    \begin{itemize}
        \item $P(s,u)$ does not go through $(v,u)$ and $P(s,v)$ does not go through $(u,v)$.
        \item If $P(s,u)$ goes through $(v,u)$, there is some $s' \neq s$ where $s' \in P(s,u) \cap S$.
        \item If $P(s,v)$ goes through $(u,v)$, there is some $s' \neq s$ where $s' \in P(s,v) \cap S$.
    \end{itemize}
    If none of the above hold, we say that $s \in S$ is \emph{essential}.
\end{definition}

It is reasonably clear that non-essential vantage points do not have any bearing on the value of~$p$.
We now state a few observations with proofs in \Cref{app:non-adaptive}.

\begin{restatable}{observation}{noness}
\label{obs:non-ess}
        Let $S' \subseteq S$ be the set of all essential vantage points.
        Then $\prob{e \in R(S)} = \prob{e \in R(S')}$.
\end{restatable}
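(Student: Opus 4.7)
The plan is to show the stronger pointwise statement $R(S) = R(S')$ for \emph{every} capacity assignment to the edges of $G$; the probabilistic equality $\prob{e \in R(S)} = \prob{e \in R(S')}$ follows immediately. The inclusion $R(S') \subseteq R(S)$ is trivial from $S' \subseteq S$, so I focus on the reverse direction.

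Fix a capacity assignment and suppose edge $e = (u, v)$ is revealed by some $s \in S$. By \Cref{lem:subpath}, some shortest path rooted at $s$ must traverse $e$; under unique shortest paths this forces either $P(s, u)$ to end with edge $(v, u)$ or $P(s, v)$ to end with edge $(u, v)$, with $e$ being the bottleneck of that path. In particular, condition~1 of \Cref{def:ess} cannot hold for $s$. If $s \in S'$ we are done, so assume $s$ is non-essential; WLOG $P(s, v)$ ends with $(u, v)$, so the only applicable non-essentiality clause for $s$ is condition~3. Thus $(P(s, v) \cap S) \setminus \set{s}$ is non-empty, and I choose $s'$ from this set to minimize $d(s', v)$, i.e., the vantage point closest to $v$ along the path.

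The key structural step is a subpath argument. Since $s'$ lies on $P(s, v)$, uniqueness of shortest paths makes $P(s', v)$ the suffix of $P(s, v)$ starting at $s'$, so it still terminates in edge $(u, v)$ and still contains the bottleneck edge $e$; the minimum of a path is also the minimum of any subpath containing it, so $e$ remains the bottleneck of $P(s', v)$, and $s'$ reveals $e$ as well. I then verify that $s' \in S'$: (i) by the minimality in the choice of $s'$, no element of $S$ lies strictly between $s'$ and $v$ on $P(s, v)$, so no witness for condition~3 exists for $s'$; and (ii) $P(s', u)$ is a prefix of $P(s, u)$ starting at $s'$, which never visits $v$, so $P(s', u)$ does not use edge $(v, u)$ and the premise of condition~2 fails. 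Hence $s'$ is essential and $e \in R(S')$.

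The main obstacle is choosing the witness $s'$ correctly. A careless selection could simply substitute one non-essential vantage point for another, forcing an iterative replacement whose termination becomes awkward to argue, especially in corner cases where multiple vantage points mutually shadow each other along the path. Selecting $s'$ as the vantage point closest to $v$ along $P(s, v)$ is the right move because it simultaneously eliminates the possibility of a condition-3 witness (by minimality) and rules out a condition-2 witness (because the resulting $P(s', u)$ avoids $v$ entirely), promoting $s'$ to an essential vantage point in a single step.
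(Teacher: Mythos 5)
Your argument is correct in substance and follows essentially the same route as the paper's proof: both rest on the subpath observation (\Cref{lem:subpath}) that if a vantage point reveals $e$ along $P(s,v)$, then any vantage point sitting further along that path toward $e$ reveals $e$ as well, so the revelation event for $S$ is contained in the revelation event for the essential points. Your version is in fact tidier than the paper's, which discards non-essential points one at a time and does not explicitly check that the surviving ``witness'' is itself essential; your choice of the vantage point on $P(s,v)$ closest to $v$ supplies exactly that missing verification in one step.

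Two small caveats. First, the opening claim $R(S)=R(S')$ for every capacity assignment is too strong as stated: $S'$ is defined relative to the fixed edge $e$, and discarding $e$-non-essential vantage points can certainly lose \emph{other} edges. What your argument actually establishes, and all that is needed, is the pointwise equivalence $e\in R(S)$ iff $e\in R(S')$. Second, your minimal choice can return $s'=v$ when $v\in S$ (since $v$ lies on $P(s,v)$ and minimizes $d(\cdot,v)$), in which case ``$P(s',v)$ still terminates in edge $(u,v)$'' is vacuous; one should instead note that $v$ reveals its incident edge $e$ via $P(v,u)$ and check essentiality on that side, which goes through provided $u\notin S$. (When both $u,v\in S$, the paper's literal definition of essential excludes both endpoints; this degenerate case is glossed over by the paper's own proof as well.) Neither issue affects the soundness of the main argument.
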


Now we define the minimal subgraph required to compute $\prob{e \in R(S)}$.
\begin{definition}[$T_e$, the tree rooted at $e$]
    Let $S' \subseteq S$ be the set of all essential vantage points. 
    We define $T_e = \bigcup_{s' \in S'} P(s',u) + e$ with (edge-)root $e$ (see \Cref{fig:edge-tree}).
\end{definition}

\begin{figure}[htbp]
    \centering
    \begin{subfigure}{.49\textwidth}
        \centering
        \includegraphics[width=.8\linewidth]{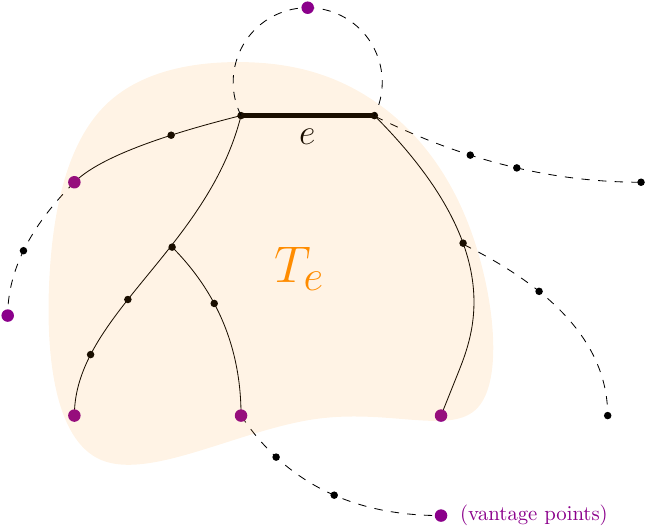}
        \caption{$T_e$, whose edges are solid. Vantage points are in magenta.}
        \label{fig:edge-tree}
    \end{subfigure}\hspace{2mm}%
    \begin{subfigure}{.49\textwidth}
        \centering
        \includegraphics[width=.8\linewidth]{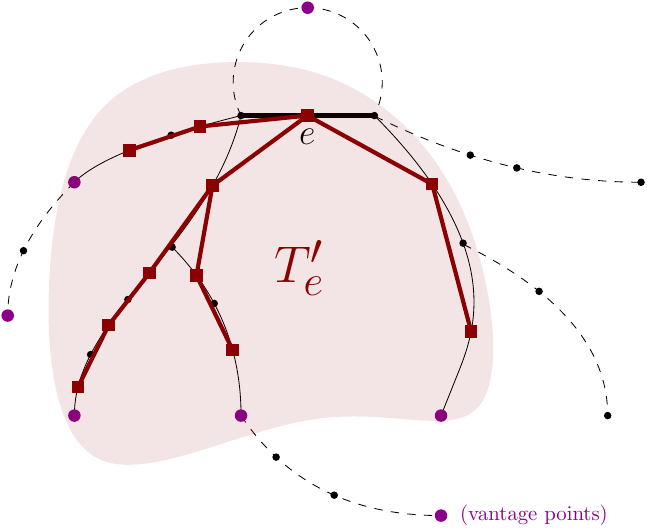}
        \caption{$T'_e$, whose vertices (squares) and edges (thick lines) are dark red.}
        \label{fig:vertex-tree}
    \end{subfigure}
    \caption{Depictions of $T_e$ and $T'_e$, the latter of which is convenient to use for $\GoodL$.}
\end{figure}

\begin{restatable}{observation}{tetreefast}
\label{obs:te-tree-fast}
    $T_e$ is a tree, with all its leaves belonging to $S$.
    Moreover, $T_e$ can be computed in $\Ot(m)$ time.
\end{restatable}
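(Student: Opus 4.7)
The plan is to verify the claim in three parts: (1) $T_e$ is a tree; (2) its leaves (modulo the endpoints of the root edge $e$) lie in $S$; and (3) it can be built in $\Ot(m)$ time.

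First, for the tree property, I would consider the subgraph $T := \bigcup_{s' \in S'} P(s', u)$. Each $P(s', u)$ is a unique shortest path ending at $u$, so by subpath optimality, any two such paths that share a vertex $w$ must coincide on their $w$-to-$u$ suffix. Consequently $T$ is a subtree of the shortest path tree rooted at $u$. Adjoining $e = (u, v)$ is then either a no-op (when some essential $s'$ has $P(s', u)$ traversing $(v, u)$, so $e \in T$ already) or attaches $v$ to $u$ as a new leaf; either way $T_e$ is still a tree.

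Second, for the leaf structure, I would argue that every essential $s' \in S'$ is a leaf of $T_e$. By \Cref{def:ess}, no other vantage point in $S$ lies on $P(s', u)$: the negation of condition 2 handles the ``$P(s', u)$ through $(v, u)$'' case directly, while the negation of condition 3 applied to $P(s', v) \supseteq P(s', u)$ handles the symmetric case. Hence no other $P(s'', u)$ can pass through $s'$, since doing so would place the vantage point $s'$ on $P(s'', u)$, contradicting the essentiality of $s''$. Thus $s'$ has degree $1$ in $T$ and (provided $s' \notin \{u, v\}$) in $T_e$. Conversely, any vertex of $T$ of degree $\geq 2$ is an internal branching point of some $P(s'', u)$ and hence, by the above, cannot be in $S$. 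So the only leaves of $T_e$ besides the two endpoints of the root edge $e$ are essential vantage points in $S$.

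Third, for the running time, I would build a shortest path tree $T_u$ rooted at $u$ using Dijkstra's algorithm in $\Ot(m)$ time. With $O(n)$-time preprocessing of $T_u$ for ancestor/descendant queries, each essentiality condition becomes checkable in $O(1)$ per vantage point: whether $P(s, u)$ passes through $(v, u)$ reduces to whether $s$ lies in the subtree of $v$ in $T_u$, and the symmetric check uses $T_v$; the ``no other vantage point on the path'' condition is handled by a single DFS that records each vertex's nearest strict $S$-ancestor. Extracting the Steiner subtree of $T_u$ spanning $\{u\} \cup S'$ in $O(n)$ time by marking and contracting unmarked degree-2 chains, then adjoining $e$, yields $T_e$ within the stated bound. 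The main obstacle is purely bookkeeping: cleanly unifying the two branches of \Cref{def:ess} (whether $P(s', u)$ or $P(s', v)$ traverses $e$) in the leaf argument and handling the edge cases where $u$ or $v$ themselves belong to $S'$.
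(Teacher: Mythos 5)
Your proposal is correct and follows essentially the same route as the paper: uniqueness of shortest paths makes $\bigcup_{s' \in S'} P(s',u)$ a subtree of the shortest-path tree rooted at $u$, the leaves are the essential vantage points by \Cref{def:ess}, and Dijkstra from $u$ (and $v$) followed by pruning gives the $\Ot(m)$ bound; you are in fact more explicit than the paper about the corner cases at the endpoints of $e$. One small slip in the leaf paragraph: the two facts you establish there (every essential vantage point is a leaf, and no vertex of degree at least two lies in $S$) give the reverse inclusion, not the stated one; the inclusion actually needed --- every leaf other than an endpoint of $e$ lies in $S'$ --- follows directly from the construction, since every vertex of $T_e$ other than $v$ lies on some $P(s',u)$, and a degree-one vertex cannot be interior to any such path, so it must be $u$, $v$, or some $s' \in S'$.
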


\begin{restatable}{observation}{bottontree}
\label{obs:bottontree}
    $e$ is a bottleneck edge for $S$ if and only if $c(e)$ is the smallest capacity on some (edge-)root to (vertex-)leaf path on $T_e$.
\end{restatable}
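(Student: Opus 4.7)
The plan is to prove the two directions separately, each using \Cref{def:ess}, \Cref{lem:subpath}, and \Cref{obs:te-tree-fast}.

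For the forward direction, I start with any witness $s \in S$ and $t \in V$ such that $e = (u,v)$ is the bottleneck on $P(s,t)$. After orienting so that $u$ precedes $v$ on $P(s,t)$ (the other case is symmetric), \Cref{lem:subpath} yields $P(s,v) = P(s,u) + e$ as a subpath of $P(s,t)$, and every edge of $P(s,u)$ has capacity strictly greater than $c(e)$. If $s$ is already essential, then by \Cref{obs:te-tree-fast} $s$ is a leaf of $T_e$, and the root-to-leaf path from $e$ to $s$ in $T_e$ consists exactly of $e$ together with the edges of $P(s,u)$, so $c(e)$ is strictly smallest on it. If $s$ is non-essential, then since $P(s,v)$ uses $e$, the third clause of \Cref{def:ess} supplies another vantage point on $P(s,v)$, and I take $\tilde s$ to be the vantage point in $S \cap P(s,v)$ closest to $v$. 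Then $P(\tilde s, v) \subseteq P(s,v)$ still uses $e$ as its last edge and contains no other vantage point, while $P(\tilde s, u) = P(\tilde s, v) - e$ does not use $e$, so $\tilde s$ is essential. Since $P(\tilde s, u)$ is a subpath of $P(s,u)$, its edges still have capacity larger than $c(e)$, so the root-to-leaf path from $e$ to $\tilde s$ in $T_e$ has $c(e)$ as its minimum.

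For the reverse direction, I begin with a leaf $s' \in S'$ of $T_e$ for which $c(e)$ is the smallest capacity on the root-to-leaf path from $e$ to $s'$. Since $s'$ is essential, one of $P(s',u), P(s',v)$ uses $e$; call the corresponding endpoint $t^* \in \{u,v\}$. Unfolding the definition $T_e = \bigcup_{s'' \in S'} P(s'',u) + e$, the root-to-leaf path in $T_e$ from $e$ to $s'$ consists exactly of the edges of $P(s', t^*)$. By hypothesis every edge on this path other than $e$ has capacity strictly greater than $c(e)$, so $e$ is the bottleneck of $P(s', t^*)$, and therefore $e \in R(S)$.

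I expect the main obstacle to be the reduction step in the forward direction: given an arbitrary non-essential witness $s$, I must pass to an essential $\tilde s$ without losing the property that its shortest path to $u$ still lies within $P(s,u)$, so the bottleneck comparison transfers. Choosing $\tilde s$ to be the vantage point closest to $v$ on $P(s,v)$ is what makes the reduction terminate in a single step while simultaneously enforcing essentiality and preserving the capacity condition.
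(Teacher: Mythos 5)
Your proof is correct and follows essentially the same route as the paper's: the forward direction reduces an arbitrary witness path to the nearest (hence essential) vantage point on it, and the reverse direction unfolds the definition of $T_e$ to recover a shortest path on which $e$ is the bottleneck. You are somewhat more explicit than the paper about why the chosen $\tilde s$ is essential and why the capacity comparison transfers to the subpath, but the underlying argument is identical.
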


Put otherwise, using \Cref{obs:bottontree}, we have reduced the problem of computing $\prob{e \in R(S)}$ to that of counting labellings of the edges of $T_e$ with $[|V(T_e)| - 1]$ such that no two edges receive the same label, and $e$ has the smallest label on some root to leaf path of $T_e$ (recall that $T_e$ is an edge-rooted tree).
These labellings can be thought of as the order statistic of edge capacities in $T_e$.

\smallskip\noindent\textbf{Counting $T_e$ Labellings. }
For convenience, we work on the ``edge-tree'' $T'_e$ rooted at $e$
\begin{align*}
    T'_e = (E(T_e), \set{(f, \textrm{parent-edge}_{T_e}(f)) \mid f \in E(T_e)}),
\end{align*}
which allows us to think in the parlance of labelling vertices instead of edges (see \Cref{fig:vertex-tree}).
One can see that counting vertex-labellings on $T'_e$ such that there is some root to leaf path where the label of $e$ is smallest is the same as counting similarly constrained edge-labellings on $T_e$.
This leaves us with the following problem.

\begin{problem}[$\GoodL$]
    Given a rooted tree $T = (V, E)$ with root $v_1$, how many bijective labellings of $V$ with $[|V|]$ result in the label of $v_1$ being the minimum label of some root-to-leaf path?
\end{problem}

To solve $\GoodL$, however, we first consider a more constrained auxiliary problem.

\begin{problem}[$\BWL$]
    Given a rooted tree $T = (V, E)$ with root $v_1$, compute, for each $t \in [|V|]$, how many black-white colorings of $V$ there are such that:
    \begin{itemize}
        \item Exactly $t$ vertices are colored black;
        \item There is no root-to-leaf path comprising of only vertices colored white.
    \end{itemize}
\end{problem}

Conceptually, we can think of vertices colored black as those whose labels are smaller than the root's, and vertices colored white as those whose labels are no smaller than the root's.
A solution to $\BWL$ counts bad events, since, under this viewpoint we want some root-to-leaf path with all vertices colored white.

\begin{observation}
\label{obs:bwl-to-good}
    Let $B[0 \ldots t]$ be a solution to $\BWL$ on $T=(V,E)$.
    Then
    \begin{align*}
        \sum_{0 \le t \le |V|} \paren{{|V| \choose t} - B[t]}\paren{|V|-1-t}!t!
    \end{align*}
    gives a solution to $\GoodL$.
\end{observation}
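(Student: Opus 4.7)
The plan is to set up a natural map from labellings to two-colorings of $V$ and count fibers. Given a bijective labelling $\ell : V \to [|V|]$, color a vertex $v$ black if $\ell(v) < \ell(v_1)$ and white otherwise; in particular $v_1$ is always white. A root-to-leaf path has $\ell(v_1)$ as its smallest label if and only if every vertex on the path has label $\geq \ell(v_1)$, which is precisely the condition that the path be entirely white under this coloring. So $\ell$ is counted by $\GoodL$ exactly when its induced coloring admits some all-white root-to-leaf path.

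Counting fibers is a routine multinomial computation. A coloring with $t$ black vertices that lies in the image of the map must have a white root, and arises from exactly $t! \cdot (|V|-1-t)!$ labellings: the blacks receive $\{1,\ldots,t\}$ in some order, the root receives label $t+1$, and the remaining $|V|-1-t$ non-root whites receive $\{t+2,\ldots,|V|\}$ in some order. Therefore the number of good labellings equals $\sum_{t} t!(|V|-1-t)! \cdot N_t$, where $N_t$ is the number of colorings with exactly $t$ black vertices that admit an all-white root-to-leaf path.

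Finally, to identify $N_t$ with $\binom{|V|}{t} - B[t]$, note that any coloring admitting an all-white root-to-leaf path automatically has a white root; so $N_t$ equals the total number of $t$-black colorings minus the number of $t$-black colorings with no all-white root-to-leaf path, the latter being exactly $B[t]$. Substituting yields the claimed sum. The one step that calls for a bit of care is verifying how colorings with a black root are handled: the definition of $\BWL$ imposes no constraint on the root, so such colorings are counted in both $\binom{|V|}{t}$ and $B[t]$ (they trivially have no all-white root-to-leaf path) and cancel in the difference, precisely matching the fact that the labelling-to-coloring map never produces a coloring with a black root.
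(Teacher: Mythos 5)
Your proof is correct and follows essentially the same approach as the paper's: classify good labellings by the number $t$ of vertices with labels below the root's, identify $\binom{|V|}{t}-B[t]$ as the count of $t$-black colorings admitting an all-white root-to-leaf path, and multiply by the $t!\,(|V|-1-t)!$ labellings compatible with each such coloring. Your explicit remark that black-root colorings cancel in the difference is a welcome bit of extra care, but the underlying argument is the same.
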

\begin{proof}
    There are ${|V| \choose t} - B[t]$ colorings such that exactly $t$ vertices are colored black and there is some white root-to-leaf path.
    For each such coloring, there are exactly $(|V|-1-t)!t!$ labellings of the vertices such that white vertices have larger labels than the root's label; black vertices have smaller labels than the root's label; and there is some root-to-leaf path where the root has the smallest label.
    The result then follows from adding up good labellings where exactly $t$ vertices have a smaller label than that of the root.
\end{proof}

We now supply a polynomial time algorithm for $\BWL$, which returns a polynomial $\sum_t B[t]x^t$.

\begin{algo}
    \underline{\BWLN$(T=(V,E), v \in V)$}\+
    \\ if $v$ is leaf\+
    \\   return $x$\-
    \\ for $u \in \mathsf{children}(v)$\+
    \\   $P_u(x) \gets $\textsc{CountBlackWhiteColorings}$(T, u)$\-
    \\ $T' \gets $ subtree rooted at $v$
    \\ return $\Pi_u P_u(x) + x(1+x)^{|V(T')|-1}$
\end{algo}

\begin{claim}
\label{claim:bwl-correct}
    \BWLN returns a polynomial such that the coefficient of $x^t$ is the number of colorings of $V$ there are such that exactly $t$ vertices are colored black and there is no white root-to-leaf path.
\end{claim}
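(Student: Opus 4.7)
The plan is to prove the claim by structural induction on the subtree rooted at $v$. The base case is when $v$ is a leaf: the only root-to-leaf path in that subtree is the single vertex $v$, so the coloring avoids an all-white root-to-leaf path iff $v$ is colored black. There is exactly one such coloring, contributing one black vertex, so the returned polynomial $x$ is correct.

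For the inductive step, suppose that for each child $u$ of $v$ the recursive call $P_u(x)$ already satisfies the claimed property on the subtree rooted at $u$. I will split colorings of $T'$ (the subtree rooted at $v$) into two disjoint cases according to the color of $v$. If $v$ is black, every root-to-leaf path from $v$ starts with a black vertex and hence is automatically not all-white, so every coloring of the remaining $|V(T')|-1$ vertices is permissible. Each such vertex independently contributes a factor of $(1+x)$ (white with weight $1$, black with weight $x$) and $v$ itself contributes $x$, giving $x(1+x)^{|V(T')|-1}$. If $v$ is white, then any root-to-leaf path from $v$ continues into the subtree of a unique child $u$, and the whole path is all-white iff that continuation is an all-white $u$-to-leaf path. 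Hence the coloring is \emph{good} iff for every child $u$ the induced coloring of the subtree rooted at $u$ is itself good; by the inductive hypothesis and independence across children, the generating function for this case is $\prod_u P_u(x)$, with no extra factor of $x$ since $v$ contributes none. Summing the two disjoint cases recovers exactly $\prod_u P_u(x) + x(1+x)^{|V(T')|-1}$, as returned by the algorithm.

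The main step to justify carefully is the factorization in the white case: the subtrees rooted at distinct children of $v$ share only $v$ itself, so once $v$ is fixed white, the conditions ``no all-white continuation through $u$'' for different children $u$ are constraints on disjoint vertex sets and combine multiplicatively in the generating function. The rest is just careful bookkeeping to ensure each black vertex contributes exactly one factor of $x$ to the total count.
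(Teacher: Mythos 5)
Your proof is correct and follows essentially the same route as the paper's: structural induction on the subtree, with a case split on the color of $v$, yielding $x(1+x)^{|V(T')|-1}$ in the black case and $\prod_u P_u(x)$ in the white case. Your added justification of the multiplicative factorization across children (disjoint vertex sets) is a welcome bit of extra care but does not change the argument.
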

\begin{proof}
    The claim is certainly true for trees with one vertex.
    More generally, consider any vertex $v$.
    If $v$ is colored black, we can color its descendants in any way and there would be no white root-to-leaf path.
    The coefficient of $x^t$ in $x(1+x)^{|V(T')|-1}$ counts the number of ways to color $t$ vertices black, with $v$ being colored black.
    If, on the other hand, $v$ is colored white, there must be no root-to-leaf path in all of the subtrees rooted at all of its children.
    We can recursively compute the polynomials $P_u(x)$ with respect to each of $v$'s children $u$, and multiply them to get the polynomial which counts colorings in this case.
\end{proof}

\begin{claim}
\label{claim:bwl-time}
    \BWLN runs in time $\Ot(|V(T)|^2)$.
\end{claim}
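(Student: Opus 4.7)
The plan is to amortize the polynomial multiplication cost over the tree. Let $n_v := |V(T_v)|$ denote the size of the subtree rooted at $v$, and write $n := |V(T)|$. A straightforward induction on the recursion shows that $P_v(x)$ has degree at most $n_v$, since it enumerates colorings of an $n_v$-vertex subtree by number of black vertices.

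At each vertex $v$ with children $u_1, \ldots, u_k$, the algorithm performs two tasks beyond its recursive calls: (i) form $x(1+x)^{n_v - 1}$, which takes $O(n_v)$ time via incremental Pascal's-triangle coefficients, and (ii) multiply the child polynomials $P_{u_1}(x), \ldots, P_{u_k}(x)$, whose degrees $n_{u_1}, \ldots, n_{u_k}$ sum to $n_v - 1$. For step (ii) I would schedule the $k-1$ pairwise multiplications as a balanced binary tree (analogous to mergesort): pair up adjacent polynomials, then pair up the pairwise products, and so on. Using FFT for each pairwise multiplication, each of the $O(\log k)$ levels handles polynomials whose total length is $O(n_v)$ at a cost of $O(n_v \log n_v)$, for a per-vertex cost of $\Ot(n_v)$.

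Summing these local costs across the tree, the total runtime is $\Ot\bigl(\sum_v n_v\bigr)$. Swapping the order of summation gives $\sum_v n_v = \sum_u (\operatorname{depth}(u) + 1) \le n^2$, since each vertex $u$ is counted once per ancestor (including itself), and the depth sum is maximized by a path. This yields the claimed $\Ot(|V(T)|^2)$ bound.

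The main obstacle is handling step (ii) carefully. A naive sequential multiplication of the $k$ child polynomials into a running product could cost $\Theta(n_v^2)$ at a single vertex (consider a root whose children themselves root long chains of comparable length), summing to $\Theta(n^3)$ across the tree and ruining the bound. The combination of FFT with a balanced multiplication schedule is what reduces the per-vertex cost to $\Ot(n_v)$ and thereby gives $\Ot(n^2)$ overall.
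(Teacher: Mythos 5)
Your proof is correct, and in fact proves the claim by a somewhat more elaborate route than the paper, which simply observes that the whole computation involves at most $|V(T)|$ pairwise polynomial multiplications (one per child, over all vertices, since $\sum_v |\mathsf{children}(v)| = |V(T)|-1$), each between polynomials of degree at most $|V(T)|$ and hence costing $\Ot(|V(T)|)$ with fast multiplication; this immediately gives $\Ot(|V(T)|^2)$. Your per-vertex amortization via a balanced product tree, together with the depth-sum identity $\sum_v n_v = \sum_u (\operatorname{depth}(u)+1) \le n^2$, is a valid alternative and gives a sharper per-vertex cost of $\Ot(n_v)$, which can beat $\Ot(n^2)$ on shallow trees. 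However, the ``main obstacle'' you identify in your last paragraph is not real: sequential accumulation of the child products does \emph{not} cost $\Theta(n^3)$. With FFT-based pairwise multiplication, the running product at $v$ always has degree at most $n_v$, so the vertex $v$ incurs at most $|\mathsf{children}(v)|$ multiplications of cost $\Ot(n)$ each, and summing $\sum_v |\mathsf{children}(v)| \le n$ recovers the paper's $\Ot(n^2)$ bound directly. Even with schoolbook multiplication, the classical amortization $\sum_v \sum_{i<j} n_{u_i} n_{u_j} \le \binom{n}{2}$ (each term counts pairs of vertices whose lowest common ancestor is $v$) bounds the total product cost by $O(n^2)$. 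So the balanced multiplication schedule is a nice refinement but is not needed for the claimed bound.
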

\begin{proof}
    There are $|V(T)|$ polynomial multiplications involved in the total computation, where each polynomial has degree at most $|V(T)|$.
    Using \cite{harvey2021integer} to bound the running time of polynomial multiplication completes the proof.
\end{proof}

\begin{algo}
    \underline{\GoodLN$(T=(V,E), v \in V)$}\+
    \\ $P(x) \gets $ \BWLN$(T,v)$
    \\ return $\sum_{0 \le t \le |V(T)|} \paren{{|V(T)| \choose t} - [x^t]P(x)}(|V(T)|-1-t)!t!$
\end{algo}

Armed with \BWLN, we may now proceed to state complete the algorithm \GoodLN.
By \Cref{obs:bwl-to-good} and \Cref{claim:bwl-correct},  \GoodLN solves $\GoodL$ and has running time 
$\Ot(|V(T)|^2)$.

\smallskip\noindent\textbf{Back to Computing $\prob{e \in R(S)}$. }
We are finally ready to compute the desired quantity, $\prob{e \in R(S)}$: (i) Compute $T'_e$; (ii) Return $\GoodLN(T'_e, e)/\paren{\card{V(T'_e)}! (m - \card{V(T'e)})!}$.
Finishing up the proof of \Cref{thm:non-adaptive-main} is straightforward and deferred to \Cref{app:non-adaptive}.

\section{Adaptive Setting (with Worst-Case Capacities)}
\label{sec:instance}

We use notation developed in Section~\ref{sec:overview}. 
We want to select $k$ vantage points from $V$ in a given graph $G=(V,E,w)$ with unknown, but fixed, worst-case capacities $c$. Recall that an algorithm is called $(\alpha, \beta)$-instance optimal if it selects at most $\alpha k$ vantage points and reveals at least $OPT/\beta$ many bottleneck edges, where $OPT$ is the number of edges revealed (after selecting $k$ vantage points) by an algorithm that is also given all the capacities as input. Here, $\alpha \geq 1$ and $\beta\geq 1$.

\subsection{Lower bounds}\label{iolower}
In the following lower bounds, we do not limit the computation time of an algorithm -- it can be adaptive, and can do any finite computation between rounds.

\begin{restatable}[Deterministic Algorithm Tradeoff]{theorem}{thmdetiolb}\label{thmdetiolb}
For any $1\leq k\leq n$, there exists a graph $G=(V,E,w,c)$ for which any deterministic, adaptive, $(\alpha,\beta)$-instance optimal algorithm must satisfy $\alpha^2 \beta \geq \Omega(n/k).$
\end{restatable}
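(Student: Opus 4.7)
The plan is to exhibit, for each $k$, a graph $G$ on $n$ vertices for which an adversary forces $\alpha^{2}\beta = \Omega(n/k)$ against any deterministic adaptive algorithm $\A$. Take $G$ to be a disjoint union of $k$ cycles, each of odd length $m = \Theta(n/k)$ so that shortest paths within a cycle are unique, and let an adaptive adversary commit edge capacities on the fly as $\A$ issues queries.

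The adversary keeps a partial edge-capacity assignment, initially empty, and processes $\A$'s queries in order. When $\A$ queries vertex $v$, the adversary assigns to each still-unassigned edge incident to $v$ the smallest unused capacity in $\{1,2,\ldots\}$. For every $u\neq v$ the bottleneck of $P(v,u)$ is then an already-committed incident edge (either one of $v$'s own or one at a previously queried vertex), because every unassigned edge on the arc is guaranteed to receive a strictly larger capacity later; the answer returned is therefore internally consistent. Consequently, each query reveals at most the two edges incident to the queried vertex, so after $\A$'s $\alpha k$ queries it has seen at most $2\alpha k$ edges.

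When $\A$ halts, the adversary completes the capacity assignment so as to inflate OPT. In each cycle $i$ the $q_i$ queried vertices partition the cycle into $q_i$ arcs; within each arc, the remaining interior edges are given capacities that are all strictly larger than every previously committed capacity, arranged to be monotone decreasing from the midpoint of the arc toward each endpoint. OPT, knowing all capacities, places a vantage at the midpoint of the longest arc in each cycle. By pigeonhole the longest arc of cycle $i$ has length at least $m/q_i$, and this vantage reveals its entire interior, so
\begin{align*}
    \mathrm{OPT} \;\geq\; \sum_{i\,:\,q_i>0}\frac{m}{q_i} \;+\; \sum_{i\,:\,q_i=0}(m-1) \;=\; \Omega\!\left(\frac{n}{\alpha}\right),
\end{align*}
by the convex inequality $\sum_i 1/q_i \geq k/\alpha$ under $\sum_i q_i = \alpha k$ in the regime where $\A$ spreads its queries, while the second term alone gives an $\Omega(n)$ bound on OPT whenever $\A$ concentrates its queries and leaves cycles untouched. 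Combining the two estimates yields
\begin{align*}
    \beta \;\geq\; \frac{\mathrm{OPT}}{2\alpha k} \;\geq\; \Omega\!\left(\frac{n}{\alpha^2 k}\right),
\end{align*}
so $\alpha^{2}\beta = \Omega(n/k)$ as claimed.

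The main technical obstacle is verifying that the adversary's \emph{smallest-unused-capacity} commitments during $\A$'s interaction are globally compatible with the later \emph{decreasing-from-midpoint} filling that lets OPT reveal so many edges: this is handled by always taking interior capacities strictly larger than every incident-edge capacity, so that bottlenecks previously reported to $\A$ remain correct under the completion. A small amount of case analysis (uniform versus concentrated query distributions) is then needed to apply the convex inequality cleanly and extract the bound in both regimes.
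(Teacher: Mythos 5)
Your construction and adversary are essentially identical to the paper's: disjoint components of size $\Theta(n/k)$, an adversary that assigns each newly queried vertex's incident edges the smallest unused capacities so that each query reveals at most two new edges (hence at most $2\alpha k$ in total), and a post-hoc planting of a long decreasing run in the largest untouched gap of each component so that $\mathrm{OPT} = \Omega(n/\alpha)$. The only differences are cosmetic and both versions are correct: you use odd cycles where the paper uses paths, and you finish the averaging step with AM--HM ($\sum_i 1/q_i \ge k^2/\sum_i q_i$) where the paper instead observes that at least $k/2$ components receive at most $2\alpha$ queries.
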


\begin{proof}
We will assume $k$ divides $n$ for simplicity. Let $G$ be the graph with $k$ connected components $P_{i}$, $1 \leq i \leq k$, where each $P_i$ is a path on $n/k$ vertices. Let $\mathcal{A}$ be a deterministic and adaptive algorithm. Consider the following adversarial strategy. For any $P_i$, whenever $\mathcal{A}$ selects a vantage point $v_{0}^{i}$ that is the first vantage point on $P_i$, the adversary reveals only the two neighboring edges on $v$, giving them the lowest two capacities out of the $n/k-1$ edges in $P_i$. At any point in the execution of $\mathcal{A}$, let $\{v_{0}^{i},\cdots,v_{j-1}^{i}\}$ be the vantage points selected on $P_i$ so far, and assume $\mathcal{A}$ selects a new vantage point $v_{j}^{i}$ on $P_i$. If $v_{j}^{i}$ is  not a neighbor of $v_{k}^{i}$ for any $1 \leq k \leq j-1$, the adversary reveals the two edges incident to $v_{j}^{i}$, giving them the next lowest capacities of all the edges revealed in $P_i$ so far. Otherwise, the adversary reveals either one or zero edges (depending on whether one or both neighbors of $v_{j}^{i}$ was already selected). We observe that the adversary's strategy is consistent: for any $P_i$, all edges yet to be revealed have a higher capacity than those revealed.

Let $\lambda_{i}$ be the number of vantage points selected on $P_i$ at the end of $\mathcal{A}$. We have that $\sum_{i=1}^{k} \lambda_{i} = \alpha k$, where $\alpha$ is the resource augmentation factor. The adversary now reveals the remaining edges in the following way. Fix a path $P_i$. If $\lambda_{i}=0$, the adversary gives all edges in $P_i$ decreasing capacities from the first vertex in $P_i$ (and therefore selecting that vertex would have revealed all $n/k-1$ edges). Otherwise, if $\lambda_i >0$, there is a contiguous set of $\ell_i= (n/k-1)/(\lambda_i+1) - 2$ edges such that none of the endpoints have been selected by $\mathcal{A}$ as vantage points. The adversary gives them decreasing weights, starting from, say, the first vertex in this set of edges. Selecting this first vertex would have revealed not only the $\ell_i$ edges with decreasing capacities, but also the two ``outer edges'', since their capacities are lower. Thus selecting this vertex reveals $\ell_i+2=(n/k-1)/(\lambda_i+1)$ edge capacities. We now have that $OPT \geq \sum_{i=1}^{k} (\ell_i+2)$.

Clearly $\mathcal{A}$ reveals at most $2 \alpha k$ edges, at most two per vantage point selected.  Thus 
\[\beta \geq \frac{\sum_{i=1}^{k} \frac{(n/k-1)}{\lambda_{i}+1}}{2\alpha k}, \ \text{and} \ \  \alpha k= \sum_{i=1}^{k} \lambda_i.\]
The second equation implies that there exists a set $S\subset \{1,\cdots,k\}$ such that $|S| = k/2$ and $\lambda_j \leq 2 \alpha$ for all $j \in S$. We will use $S$ to lower bound the sum in the first inequality. We have
\begin{eqnarray*}
\alpha \beta  \geq  \frac{\sum_{i=1}^{k} \frac{(n/k-1)}{\lambda_{i}+1}}{2 k}  \geq  \frac{\sum_{i \in S} \frac{(n/k-1)}{\lambda_{i}+1}}{2k}  \geq  \frac{|S|\frac{n/k -1 }{2\alpha +1 }}{2 k} 
=\frac{k/2}{2k}.\frac{n/k -1}{2 \alpha +1} \geq \frac{1}{4}.\frac{n/2k}{3\alpha},  \notag 
\end{eqnarray*}
and thus $\alpha^2 \beta \geq n/24k$, proving the theorem.
\end{proof}

\noindent\textbf{Remark:} Note that the above theorem is vacuous when $\alpha > \sqrt{n/k}$, as $\beta$ is always at least 1. We leave open the existence of a lower bound that implies $\beta = \omega(1)$ when $\alpha > \sqrt{n/k}$. As we will see, the trade off in Theorem~\ref{thmdetiolb} is tight for trees, so such a lower bound (if it exists) would involve an instance with cycles.

Next, we consider a randomized algorithm $\mathcal{A}$ that selects at most $\alpha k$ vantage points. Given an instance $G=(V,E,w,c)$, different runs of $\mathcal{A}$ may reveal different numbers of edge capacities. We define $\beta$ for a randomized algorithm to be OPT divided by the expected number of edge capacities revealed by $\mathcal{A}$ over all runs of $\mathcal{A}$ on $G$ (expectation over random coins of $\mathcal{A}$).  The proof of the following theorem can be found in \Cref{app:instance-optimal}.

\begin{restatable}[Randomized Algorithm Tradeoff]{theorem}{thmraniolb}\label{thmraniolb}
For any $1\leq k\leq n$, there exists a graph $G=(V,E,w,c)$ for which any randomized, adaptive, $(\alpha,\beta)$-instance optimal algorithm must satisfy $\alpha \beta \geq  \Omega\left(\sqrt {\frac{n}{k}} \frac{1}{(1+\log (n/k))}\right)=\widetilde{\Omega}(\sqrt{n/k})$.
\end{restatable}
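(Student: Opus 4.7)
My plan is to invoke Yao's minimax principle: I construct a distribution $\mu$ over capacity functions (the graph $G$ being fixed) so that every deterministic, adaptive algorithm that selects $\alpha k$ vantage points reveals at most $\widetilde{O}(\alpha\sqrt{nk})$ bottleneck edges in expectation over $\mu$, while $\mathrm{OPT} = \Theta(n)$ on every input in the support of $\mu$. Since a randomized algorithm is a distribution over deterministic algorithms, its expected reveal on an input drawn from $\mu$ cannot exceed that of the best deterministic algorithm, which immediately yields $\alpha\beta = \widetilde{\Omega}(\sqrt{n/k})$.

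The hard distribution shares the combinatorial skeleton of \Cref{thmdetiolb}: $G$ is a disjoint union of $k$ isomorphic components on $n/k$ vertices each, so that the algorithm cannot discriminate among components by topology. On each component, capacities are sampled independently from a family with three design properties: (a) the best single vantage point on the component reveals $\Theta(n/k)$ bottleneck edges, and hence $\mathrm{OPT} = \Theta(n)$; (b) every other vantage point reveals at most $\widetilde{O}(\sqrt{n/k})$ bottleneck edges; and (c) after any adaptive sequence of non-optimal probes on the component, the posterior location of the optimal vantage point remains nearly uniform over $\widetilde{\Omega}(\sqrt{n/k})$ candidate vertices.

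Granted such a distribution, the analysis decomposes cleanly. By (b), the total reveal contribution of non-optimal probes is at most $\alpha k \cdot \widetilde{O}(\sqrt{n/k}) = \widetilde{O}(\alpha\sqrt{nk})$. By (c), each probe hits the still-hidden optimal vantage point of its path with conditional probability $\widetilde{O}(\sqrt{k/n})$, so the expected number of optimal-vertex hits over $\alpha k$ probes is $\widetilde{O}(\alpha k\sqrt{k/n})$, each contributing $\Theta(n/k)$ edges, for another $\widetilde{O}(\alpha\sqrt{nk})$ reveals. Combining, $\E_\mu[\mathrm{reveal}] = \widetilde{O}(\alpha\sqrt{nk})$, so $\beta \ge \mathrm{OPT}/\E_\mu[\mathrm{reveal}] = \widetilde{\Omega}(\sqrt{n/k}/\alpha)$, i.e., $\alpha\beta = \widetilde{\Omega}(\sqrt{n/k})$.

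The main obstacle is exhibiting a capacity distribution on each component that simultaneously satisfies (a), (b), and (c). Property (a) demands a monotone-like profile concentrated on the optimal vantage point, while (b) and (c) demand that the remainder of the component look ``uninformative'' from any other probe—tensions that make a plain path insufficient. A plausible realization therefore uses a tree-structured component, for instance a broom (a center with $\sqrt{n/k}$ hair paths of length $\sqrt{n/k}$) in which the optimal vantage point is hidden at a random position along a random hair, with capacities assigned in $\sqrt{n/k}$-sized blocks so that a probe from outside a block sees only a coarse summary of that block. The $1+\log(n/k)$ slack in the statement absorbs the polylogarithmic losses one incurs in (i) the per-probe reveal bound for (b) and (ii) a Fano- or Pinsker-style information-theoretic argument establishing the posterior-support estimate in (c).
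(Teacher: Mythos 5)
Your high-level framework is the right one and matches the paper's: fix the graph, put a distribution on capacities, apply Yao's principle, and account separately for (i) the small reveal of ``non-optimal'' probes and (ii) the small probability of hitting an ``optimal'' spot. However, the proof has a genuine gap: the hard distribution, which is the entire content of the theorem, is never constructed. You explicitly flag exhibiting a distribution satisfying (a), (b), (c) as ``the main obstacle'' and offer only a ``plausible realization,'' and that realization does not work. In a tree (your broom), shortest paths are forced, so property (a) -- that the hidden optimum $v^*$ reveals $\Theta(n/k)$ edges -- essentially requires capacities to decrease along every path leading away from $v^*$. But then the broom's center (and indeed every vertex on the path from $v^*$ to the center) sees monotonically decreasing capacities along all the other hairs and therefore also reveals $\Theta(n/k)$ edges, destroying property (b). The trick of ``blocking'' revelation with low-capacity edges placed early on a probe's shortest path, which the paper uses in its torus example of Section 3.3, is unavailable here precisely because the component is a tree; and no amount of block-structured capacity assignment fixes this, since revelation depends only on the record-setting structure along forced paths. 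The Fano/Pinsker argument for (c) is likewise only named, not given.

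The paper resolves the tension you identify in a different and simpler way: rather than hiding the optimum \emph{inside} one large component, it hides \emph{which components matter}. The graph is $\sqrt{kn}$ disjoint paths of $\sqrt{n/k}$ vertices each; a uniformly random $k$ of them are ``good'' (capacities decreasing from the left endpoint, so one vantage point reveals the whole path), and the rest are ``bad'' (a uniformly random permutation of capacities, so by the record-value argument of \Cref{lem:subpath} any single probe reveals only $O(\log(n/k))$ edges in expectation). Then $OPT\geq\sqrt{kn}$, each of the $\alpha k<\sqrt{kn}$ probes lands on a good path with probability $k/\sqrt{kn}=\sqrt{k/n}$, and a short boosting argument (credit the algorithm with the entire path the moment it touches a good one; never probe a bad path twice) disposes of adaptivity without any information-theoretic machinery. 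If you want to salvage your write-up, the shortest route is to replace your single-component construction with this many-short-paths construction; as written, the argument does not go through.
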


\noindent\textbf{Remarks:} 
\begin{enumerate}
    \item As with Theorem~\ref{thmdetiolb}, note that Theorem~\ref{thmraniolb} is vacuous when $\alpha > \sqrt{n/k}$, as $\beta$ is always at least 1. We leave open the existence of a lower bound that implies $\beta = \omega(1)$ when $\alpha > \sqrt{n/k}$. As we will see, the tradeoff in Theorem~\ref{thmraniolb} is tight (upto the log factor) for trees, so such a lower bound (if it exists) would involve an instance with cycles.
    \item A comparison of the tradeoffs in Theorems~\ref{thmdetiolb} and ~\ref{thmraniolb} reveals that the tradeoff in Theorem~\ref{thmdetiolb} is larger than that in Theorem~\ref{thmraniolb} as long $\alpha < \sqrt{n/k}$ which by the above remark is the setting for Theorem~\ref{thmraniolb}. For example, when $\alpha=k=O(1)$, the first tradeoff gives $\beta = \Omega(n)$ whereas the second tradeoff gives $\beta = \Omega(\sqrt{n})$. This indicates that randomization may help. Indeed, it does, as we show next.
\end{enumerate}

\subsection{Upper bounds}\label{ioupper}
For general graphs, observe that there is always an algorithm that achieves the tradeoff of $\alpha \beta= O(n)$, simply because $OPT \leq k(n-1)$ (each vantage point reveals at most $n-1$ edge capacities in its shortest path tree), and the trivial algorithm that selects any set of $\alpha k$ vantage points reveals $\Omega(\alpha k)$ many capacities. Obviously, $\alpha \beta = O(n)$ is much worse than the tradeoffs in Theorems~\ref{thmdetiolb} and ~\ref{thmraniolb}. We first show that Theorems~\ref{thmdetiolb} and \ref{thmraniolb} are tight for trees, thus implying that strengthening Theorems~\ref{thmdetiolb} and \ref{thmraniolb} will require instances with cycles.

\begin{theorem}[Tight tradeoffs for trees]\label{thmtreeub}
Let $T=(V,E,w)$ be a weighted tree on $n$ vertices with fixed, unknown capacities $c$ on edges. 
\begin{enumerate}
    \item For any $1\leq k\leq n$, there exists a deterministic algorithm $\mathcal{A}_{det}$ 
    that for any given $1\leq \alpha \leq \sqrt{n/k}$, selects $\alpha k$ vantage points in $T$ such that $\mathcal{A}_{det}$ is $(\alpha,\beta)$-instance optimal 
    achieving the exact tradeoff in Theorem~\ref{thmdetiolb}, $\alpha^2 \beta = O(n/k)$.
    \item For any $1\leq k\leq n$, there exists a randomized algorithm $\mathcal{A}_{rand}$ that for any given $1 \leq \alpha \leq \sqrt{n/k}$, selects $\alpha k$ vantage points in $T$ such that $\mathcal{A}_{rand}$ is $(\alpha,\beta)$-instance optimal with $\alpha \beta = O(\sqrt{n/k})$. This tradeoff is only a factor $\log (n/k)$ worse than that in Theorem~\ref{thmraniolb}.
\end{enumerate}
\end{theorem}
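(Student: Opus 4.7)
The core observation for both parts is a revelation characterization on trees: for any edge $e=(u,v) \in E(T)$ and vantage set $S \subseteq V(T)$, $e$ is revealed by $S$ if and only if $S \cap (A_e \cup B_e) \neq \emptyset$, where $A_e$ (respectively $B_e$) is the connected component of $u$ (respectively $v$) in the subgraph $T_{>c(e)}$ obtained from $T$ by deleting all edges of capacity at most $c(e)$. Since $A_e$ and $B_e$ are connected subtrees of $T$, revelation on trees reduces to a covering problem on subtrees of prescribed size.

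\textbf{Part 1 (Deterministic).} Construct $S$ of size $O(\alpha k)$ so that every connected subtree of $T$ of size greater than $n/(\alpha k)$ contains a vertex of $S$; this is achieved by iterative centroid removal of the largest remaining component. Then every OPT edge $e$ with $\max(|A_e|,|B_e|) > n/(\alpha k)$ is revealed by $\mathcal{A}_{\mathrm{det}}$. To bound the remaining OPT edges I would use a monotonicity argument: for each $s^* \in S^*$, list the edges $s^*$ reveals with $s^* \in A_e$ in decreasing capacity order as $e_1, e_2, \ldots$. Since $A_{e_i}$ is the component of $s^*$ in $T_{>c(e_i)}$ and $T_{>c}$ grows as $c$ decreases, the $A_{e_i}$'s are nested. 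If $i^*$ is the largest index with $|A_{e_{i^*}}| \le n/(\alpha k)$, then each $e_i$ with $i \le i^*$ has both endpoints in $A_{e_{i^*}}$ (since the path in $T$ from $s^*$ to either endpoint uses only edges of capacity $> c(e_{i^*})$), so $e_i$ lies in the subtree $T[A_{e_{i^*}}]$ which has at most $n/(\alpha k) - 1$ edges. Symmetric counting for $B_e$ and summing over $S^*$ bounds the missed OPT edges by at most $2n/\alpha$. Since $\mathcal{A}_{\mathrm{det}}$ trivially reveals at least $\alpha k$ edges (those incident to its vantage points), a case split on whether $OPT \ge 4n/\alpha$ yields $\alpha^2 \beta = O(n/k)$.

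\textbf{Part 2 (Randomized).} For the improved randomized bound, the plan is to mix the deterministic partition idea with random sampling at a suitably chosen scale. The algorithm uses part of its budget for a coarser-scale Part~1-style partition (deterministically covering large-component edges) and the remainder for random vantage picks within partition cells (covering small-component edges more efficiently than a deterministic net can). Adaptively pruning cells whose edges are already revealed further refines the sampling. The analysis pairs each OPT edge $e$ with the cell containing one of its components and shows expected per-edge coverage $\Omega(\alpha \sqrt{k/n})$, summing to give $\alpha \beta = O(\sqrt{n/k})$ as claimed.

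\textbf{Main obstacle.} Part~1's monotonicity/nesting argument is structurally clean and aligns with the deterministic lower bound it matches. The harder piece is Part~2: choosing the mixed deterministic/random scale so that random picks hit small-component OPT edges with expected probability $\Omega(\alpha \sqrt{k/n})$ while avoiding redundant coverage. This square-root-of-$n/k$ improvement over naive uniform sampling (which only delivers $\alpha \beta = O(n/k)$) is the crux of the randomized bound, and the analysis must carefully track how many OPT edges fall in each scale bracket to match \Cref{thmraniolb} up to the claimed $\log(n/k)$ factor.
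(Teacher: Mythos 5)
Your Part 1 is essentially correct and matches the paper's strategy (a $\gamma$-cover of the tree with $\gamma=n/(\alpha k)$, a bound of $O(n/\alpha)$ on the OPT edges the cover can miss, and the same two-case analysis), but you justify the covering property by a different and valid route: the characterization that $e$ is revealed by $S$ iff $S$ meets $A_e\cup B_e$, together with the nesting of the components $A_{e_1}\subseteq A_{e_2}\subseteq\cdots$ as capacities decrease, in place of the paper's Lemma~\ref{lem:tree-cover} (root-independence of the descendant cover). Your argument is arguably more transparent about \emph{why} at most $2n/(\alpha k)$ edges per OPT vantage point can be missed. Two small caveats: the edges incident to $\alpha k$ vertices of a tree number at least $\alpha k/2$, not $\alpha k$ (harmless); and ``iterative centroid removal of the largest remaining component'' is not obviously a set of size $O(\alpha k)$ --- the recursion can degenerate into long chains of components with a single large child, costing up to a $\log(n/(\alpha k))$ factor. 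The paper's greedy (repeatedly take the lowest vertex with at least $\gamma$ descendants and delete its subtree) gives the required net of size at most $n/\gamma=\alpha k$ with a one-line proof, so you should use that construction instead.

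Part 2 has a genuine gap: you state the target per-edge hitting probability $\Omega(\alpha\sqrt{k/n})$ but give neither an algorithm nor an analysis, and the approach you sketch (mixing deterministic scales with random picks inside cells, plus adaptive pruning) is both unsubstantiated and unnecessary. The actual solution is a one-line modification of Part 1: build the cover at the \emph{fixed} scale $\gamma=\sqrt{n/k}$, which has at most $\sqrt{nk}$ vertices and, if taken in its entirety, would reveal all but $k\gamma=\sqrt{nk}$ of OPT's edges; then select $\alpha k$ of its vertices uniformly at random. Every OPT edge revealed by the full cover is revealed by some particular cover vertex, which survives the sampling with probability at least $\alpha k/\sqrt{nk}=\alpha\sqrt{k/n}$, so by linearity the expected number revealed is at least $\frac{\alpha k}{\sqrt{nk}}(OPT-\sqrt{nk})$; the same two-case split ($OPT\gtrless 2\sqrt{nk}$) then yields $\alpha\beta=O(\sqrt{n/k})$. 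Note in particular that no adaptivity is needed --- the paper's algorithm is non-adaptive --- so the ``adaptively pruning cells'' component of your plan is a red herring.
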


\smallskip
\noindent\textbf{Proof of \Cref{thmtreeub}.} 
Both proofs will require the notion of a ``cover'' $S$, which we describe first. We take the tree $T(r)$ rooted at an arbitrary vertex $r$. Take the lowest vertex $v$ with at least $\sqrt{n}$ descendants with $v$ inclusive (i.e., any child of $v$ has fewer than $\sqrt{n}$ descendants). We put $v$ in set $S$.
We remove $v$ and its descendants and repeat until we are left with at most $\sqrt{n}$ vertices.
Clearly there are at most $\sqrt{n}$ vertices in $S$. We call this set $S$ a \emph{cover} of the tree. Sometimes we will need to vary the number of descendants: we will denote by $S(\gamma)$ the cover of size at most $n/\gamma$, generated in the same way as above starting with the lowest vertex $v$ with at least $\gamma$ descendants.
The covers satisfy the following lemma.

\begin{lemma}
\label{lem:tree-cover}
For any vertex $r'$ taken as the root of the tree, all but $\gamma$ vertices are descendants of vertices in $S(\gamma)$.
\end{lemma}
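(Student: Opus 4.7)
The plan is to reduce the lemma to a structural claim about the cover, namely that every connected component of $T \setminus S(\gamma)$ (the subgraph obtained by deleting the vertices of $S(\gamma)$) has strictly fewer than $\gamma$ vertices. Given this claim the lemma is nearly immediate: under the re-rooting at $r'$, a vertex $w$ is a descendant of some $s \in S(\gamma)$ precisely when $s$ lies on the (unique) tree path from $r'$ to $w$ (with $w$ counted as its own descendant). Hence the set of vertices not covered by descendants of $S(\gamma)$ is empty if $r' \in S(\gamma)$, and is exactly the connected component of $r'$ in $T \setminus S(\gamma)$ otherwise; by the structural claim this set has size less than $\gamma$.

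To establish the structural claim I would first record a key non-nesting property of the construction: no vertex of $S(\gamma)$ is a proper descendant in $T(r)$ of another vertex of $S(\gamma)$. Indeed, if $s'$ is added to the cover before $s$ and $s$ is a proper descendant of $s'$ in $T(r)$, then at the moment $s'$ is added $s$ still lies inside the subtree of $s'$ in the current remaining tree and is removed with it, so $s$ can never be added later. A useful corollary is that for any $s \in S(\gamma)$ the $T(r)$-subtree of $s$ is untouched by every previous removal, hence coincides with the subtree of $s$ in the remaining tree at the time $s$ is added, and the analogous statement holds for each child $c$ of $s$ in $T(r)$.

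Now I would argue the structural claim by contradiction. Suppose some component $C$ of $T \setminus S(\gamma)$ satisfies $|C| \geq \gamma$, and let $v$ be the (unique) vertex of $C$ closest to $r$ in $T(r)$; uniqueness follows because the LCA of any two candidates would itself lie in $C$ and be shallower. By connectivity of $C$ in $T$ together with $C \cap S(\gamma) = \emptyset$, either $v = r$ or the $T(r)$-parent $p$ of $v$ belongs to $S(\gamma)$. In the first case $C$ equals the set of vertices whose path from $r$ in $T(r)$ avoids $S(\gamma)$, which is exactly the leftover set at termination and has size less than $\gamma$ by the stopping rule, a contradiction. In the second case the corollary above identifies $C$ with the entire $T(r)$-subtree of $v$ (this subtree contains no $S(\gamma)$-vertex and cannot be enlarged through $p$), and the ``lowest'' criterion applied when $p$ was added forces this subtree to have fewer than $\gamma$ vertices, again a contradiction.

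The main obstacle is the structural claim; everything else is bookkeeping. The non-nesting property is the technical pivot because it is what allows one to equate ``$T(r)$-subtree of $v$'' with ``subtree of $v$ in the remaining tree at the moment its parent is processed'', an equality needed to turn the ``lowest'' criterion (which is a statement about the remaining tree at processing time) into a size bound on the component $C$ (which is a statement about $T(r)$). The conceptual step that then links the combinatorics to the lemma is the observation that, in a tree, ``descendant of $s$ under rooting at $r'$'' is the same as ``$s$ lies on the $r'$-to-$w$ path'', so uncovered vertices are exactly the $r'$-component of $T \setminus S(\gamma)$.
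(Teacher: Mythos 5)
Your reduction of the lemma to the claim that every connected component of $T \setminus S(\gamma)$ has at most $\gamma$ vertices is correct, and it is a clean reformulation (the paper instead does a direct case analysis on where $r'$ sits relative to $S(\gamma)$ in $T(r)$). However, the ``key non-nesting property'' on which you build the proof of that claim is false, and the error propagates. Your argument only rules out a \emph{later}-added cover vertex lying below an \emph{earlier}-added one; the reverse happens routinely. Take $\gamma=3$ and the path $r, v_1,\dots,v_6$ rooted at $r$: the first iteration selects $v_4$ (the lowest vertex with $3$ inclusive descendants) and deletes $v_4,v_5,v_6$; the second iteration selects $v_1$ and deletes $v_1,v_2,v_3$; the process stops with $\{r\}$. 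So $S(3)=\{v_1,v_4\}$ and $v_4$ is a proper $T(r)$-descendant of $v_1$. Consequently your corollary fails (the $T(r)$-subtree of $v_1$ has six vertices, not the three present when $v_1$ is chosen), and so does the step in your second case that ``identifies $C$ with the entire $T(r)$-subtree of $v$'': for the component $C=\{v_2,v_3\}$ with parent $p=v_1\in S(3)$, the $T(r)$-subtree of $v_2$ is $\{v_2,\dots,v_6\}$, which properly contains $C$ and contains a cover vertex.

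The argument is repairable, but the repair is exactly the content the false corollary was meant to supply. Let $C$ be a component with top vertex $v$ and $p=\mathrm{parent}(v)\in S(\gamma)$, and let $T_i$ be the remaining tree at the step when $p$ is selected. One must show (i) no vertex of $C$ is deleted before step $i$: if $w\in C$ were deleted earlier as a descendant of some $s_j$, then $s_j$ is a $T(r)$-ancestor of $w$ and the entire $T(r)$-path from $w$ up to $s_j$ is deleted at step $j$, so either $s_j$ lies on the $w$-to-$v$ path inside $C$ (impossible, since $C\cap S(\gamma)=\emptyset$) or $v$ is deleted at step $j<i$, which in turn forces $p$ to be deleted at step $j$ as a proper descendant of $s_j$, contradicting $p\in S(\gamma)$; and (ii) therefore $C$ is contained in the subtree of $v$ \emph{in $T_i$}, which has fewer than $\gamma$ vertices because $p$, not $v$, was the lowest eligible vertex at step $i$. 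With ``remaining-tree subtree of $v$'' substituted for ``$T(r)$-subtree of $v$'' throughout, your proof goes through; your first case (the $r$-component equals the final leftover) is fine modulo the same observation that every deleted non-cover vertex is separated from $r$ by the cover vertex that deleted it.
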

\begin{proof}
$S(\gamma)$ satisfies the requirement for the root $r$ chosen during the construction of $S(\gamma)$ by design. Now suppose we have a different root $r'\neq r$. We argue that $S(\gamma)$ again covers all but $\gamma$ vertices in the descendants of $S(\gamma)$ for $T(r')$ rooted at $r'$.
Consider the tree $T(r)$ rooted at $r$. We run a case study. 

If $r'$ is not a descendant of any vertex $v$ in $S(\gamma)$, then all descendants of $S(\gamma)$ in $T(r)$ remain to be descendants of $S(\gamma)$ in tree $T(r')$. Thus we are done. 

If $r'$ is a descendant of a vertex $v$ in $S(\gamma)$. Wlog we take $v$ as the lowest such ancestor of $r'$. We take the child $v'$ of $v$ who is the ancestor of $r'$ (or $r'$ itself). All vertices that are not in the subtree of $v'$ are now descendants of $v$ in $T(r')$. Thus they are `covered'. Now we focus on the vertices in the subtree of $v'$ in $T(r)$. Among these vertices, those that are descendants of some other vertex $w$ of $S(\gamma)$ in this subtree will continue to be descendant of $w$ in $T(r')$ -- since $r'$ is not in the subtree with root $w$. Thus the only vertices left would be those that are not in subtree of other vertices in $S(\gamma)$. There can only be at most $\gamma-1$ vertices -- otherwise $v'$ would have been selected to $S(\gamma)$ in the computation of $S(\gamma)$.
\end{proof}

\noindent\textbf{Proof of \Cref{thmtreeub} Claim 1.} The algorithm $\mathcal{A}$ selects a cover,  with any $r \in T$ as the root in the above cover-selection procedure and $\gamma = n/(\alpha k)$. Thus the size of the cover is at most $\alpha k$. If the size of the cover is strictly smaller than $\alpha k$, we select some arbitrary vantage points to make the total number of selected vantage points equal to $\alpha k$. 

Let $m$ be the total number of edge capacities revealed by $OPT$. 
Consider a vertex $r'$ chosen by the OPT solution. By Lemma~\ref{lem:tree-cover}, in the tree rooted at $r'$, all but $\gamma$ vertices of $T$ are descendants of $S(\gamma)$. Thus all the edges that are revealed by $r'$ in the OPT solution, except those that are not descendants of $S(\gamma)$, will be revealed as well by our algorithm $\mathcal{A}$. Since OPT chooses $k$ vantage points, algorithm $\mathcal{A}$ would only miss at most $k\gamma$ edges and thus reveal at least $m-k\gamma$ edges. 

Now we do a case analysis. If $m \geq 2k\gamma$, $m-k\gamma \geq m/2$. Thus $\beta\leq m/(m-k\gamma)\leq 2$. $\alpha^2 \beta \leq n/k$, since $1\leq \alpha\leq \sqrt{n/k}$.
On the other hand, if $m \leq 2k\gamma$, algorithm $\mathcal{A}$ selects $\alpha k$ vantage points, just the edges incident to these points reveal at least $\alpha k/2$ many edge capacities (in the worst case, these vertices form a matching). 
Therefore $\beta \leq \frac{m}{\alpha k/2}\leq \frac{2k\gamma}{\alpha k/2}=\frac{4n}{\alpha^2 k}.$
This immediately gives $\alpha^2 \beta \leq 4n/k$. Thus the exact tradeoff in Theorem~\ref{thmdetiolb} is achieved, and this finishes the proof of Claim 1.

\smallskip\noindent\textbf{Proof of \Cref{thmtreeub} Claim 2.} The randomized algorithm is very simple: it selects a set of $\alpha k$ vantage points uniformly at random (without repetition) from $S(\sqrt{n/k})$. The proof is similar to the deterministic case. 
Let $m$ be the number of edge capacities revealed in $OPT$. Let us consider the expected number of edges covered by the $\alpha k$ vantage points.
Note that $S(\sqrt{n/k})$ has $\sqrt{nk}$ vertices; if all selected as vantage points, would reveal at least $m - k\sqrt{n/k}=m-\sqrt{nk}$ many edges revealed by OPT. By linearity of expectation the expected number of edges that are revealed by a random choice of $\alpha k$ vantage points from $S(\sqrt{n/k})$ is at least $\frac{\alpha k}{\sqrt{nk}}(m - \sqrt{nk}).$ 
Again we consider two cases. If $m\geq 2\sqrt{nk}$, $m - \sqrt{nk}\geq m/2$. Thus
$\beta \leq 2\sqrt{nk}/(\alpha k)$, or, equivalently, $\alpha\beta \leq 2\sqrt{n/k}$.
If $m\leq 2\sqrt{nk}$, our algorithm reveals at least $\alpha k/2$ edges since we select $\alpha k$ vantage points. Thus $\beta \leq 2m/(\alpha k)\leq 4\sqrt{(n/k)}/\alpha$. Or, $\alpha\beta \leq 4\sqrt{(n/k)}$. This finishes the proof.

We remark that all algorithms are non-adaptive with polynomial run time. This is slightly surprising, since the lower bounds were on algorithms not limited by computation. We conjecture that while for trees such non-adaptive algorithms suffice, adaptivity may be needed to achieve the tradeoffs (if it is possible at all) for general graphs.

\smallskip\noindent\textbf{Planar Graphs.}
We observed in the beginning of this section that for general graphs, the trivial algorithm achieves $\alpha \beta = O(n)$, and if this is tight, a matching lower bound would involve instances with cycles. However, we show next that such an instance cannot be planar. We leave as an interesting open question whether one can design a deterministic algorithm for general graph with $\alpha \beta = o(n)$.

\begin{theorem}\label{thm:planar}
Given a weighted planar graph $G=(V, E)$ with fixed unknown capacities, and a $1\leq k \leq n$, there exists a deterministic $(\alpha, \beta)$-instance optimal algorithm with $\alpha= O((n/k)^{2/3})$ and $\beta = O(1)$, and hence $ \alpha \beta = O((n/k)^{2/3})$. 
\end{theorem}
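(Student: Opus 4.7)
The plan is to use a planar $r$-division as the combinatorial backbone of the algorithm. Set $r = \Theta((n/k)^{2/3})$ and invoke Frederickson's $r$-division theorem on planar graphs to obtain an edge partition of $G$ into $q = O(n/r)$ regions $P_1, \ldots, P_q$, where each region has $|V(P_i)| \le r$ and at most $O(\sqrt{r})$ boundary vertices, and the total boundary $|S| = O(n/\sqrt{r})$. For our choice of $r$ this gives $|S| = O(n^{2/3} k^{1/3}) = O(\alpha k)$ with $\alpha = (n/k)^{2/3}$. The algorithm deterministically computes such an $r$-division in polynomial time and takes $S$ (padding with arbitrary vertices if $|S|$ falls short of $\alpha k$) as its set of vantage points.

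To bound $\beta$, I would adapt the vantage-point replacement argument used for trees. Fix any vantage point $r'$ selected by OPT. If $r' \in S$ there is nothing to do for $r'$, so assume $r'$ lies in the interior of some region $P_i$. For each target $w \in V$, either the shortest path $P(r',w)$ stays entirely within $P_i$ (possible only when the tree path to $w$ never leaves $P_i$), or else $P(r',w)$ crosses $\partial P_i$ at some first boundary vertex $b(w) \in S$. In the latter case, the third bullet of \Cref{lem:subpath} implies that every edge revealed by $r'$ along the subpath from $b(w)$ to $w$ is also revealed by our algorithm from $b(w)$. Consequently, the edges revealed by OPT from $r'$ that might be missed by the algorithm lie on the $r'$-to-$b(w)$ prefixes; these edges form a subtree of the shortest-path tree $T_{r'}$ whose vertices are all contained in $V(P_i)$ (since only interior vertices of $P_i$ precede the first boundary crossing), and hence comprise at most $|V(P_i)| - 1 = O(r)$ edges. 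Summing over all $k$ vantage points of OPT, the number of edges revealed by OPT but not by the algorithm is at most $O(kr) = O(n^{2/3} k^{1/3}) = O(\alpha k)$.

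The conclusion follows by the two-case split already used in the tree proof. Let $ALG$ denote the number of edges revealed by the algorithm. We have $ALG \ge OPT - O(\alpha k)$. If $OPT \ge C\alpha k$ for a suitably large constant $C$, then $ALG \ge (1 - O(1/C))\cdot OPT = \Omega(OPT)$, so $\beta = O(1)$. Otherwise $OPT < C\alpha k$, and it suffices to guarantee $ALG = \Omega(\alpha k)$; this can be ensured by padding $S$ with additional vantage points whose incident edges (revealed as per the first bullet of \Cref{lem:subpath}) contribute distinct capacities, which is possible whenever the graph carries enough edges for OPT itself to be nontrivial. Combining both cases yields $\alpha\beta = O((n/k)^{2/3})$.

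The main obstacle I foresee is the clean bookkeeping of the ``interior subtree'' bound, namely verifying that the portion of $T_{r'}$ reachable before the first boundary crossing on every root-to-leaf path is genuinely confined to $V(P_i)$, which rests on the definition of boundary vertices as those shared between regions so that any detour through another region must pass through a boundary vertex. A secondary technicality is the small-$OPT$ regime, where one must ensure the vantage-point budget indeed translates into $\Omega(\alpha k)$ revealed edges; this is a minor padding argument but deserves explicit mention.
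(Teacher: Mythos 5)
Your proposal is correct and follows essentially the same route as the paper: an $r$-division with $r=\Theta((n/k)^{2/3})$, all $O(n/\sqrt{r})$ boundary vertices taken as vantage points, an $O(r)$ bound on the edges each OPT vantage point can reveal that the boundary set misses (confined to the piece containing that vantage point), and the same two-case split on whether $OPT$ exceeds $\Theta(\alpha k)$. Your bookkeeping from the source's piece (prefix up to the first boundary crossing) is a slightly cleaner phrasing of the paper's argument, which reasons from the target's piece, but the content is identical.
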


\subsection{Stronger lower bound on general graph: Multiple shortest paths}\label{notunique}

In all of the discussion so far, we have assumed that shortest paths are unique. If there are ties and each shortest path tree may break ties independently of other trees, we can have a stronger lower bound of $\beta \geq \Omega(n^{1-\eps})$ when $\alpha=k=1$, for any $\eps>0$, between the instance optimal solution and the best solution by any (possibly randomized, adaptive) algorithm. 

We first consider a grid graph of $\sqrt{n} \times \sqrt{n}$ vertices. We wrap it around to be a torus -- the top boundary is identified as the bottom boundary and the left boundary is identified as the right boundary. The shortest path trees at all vertices, topologically, are identical -- each tree includes the horizontal edges on the row containing the root $v$, and for each vertex with the same $y$-coordinate with $v$, there are two chains of length $\sqrt{n}/2$ each. See \Cref{fig:grid} (left).

\begin{figure}[htbp]
    \begin{center}
    \includegraphics[width=0.45\linewidth]{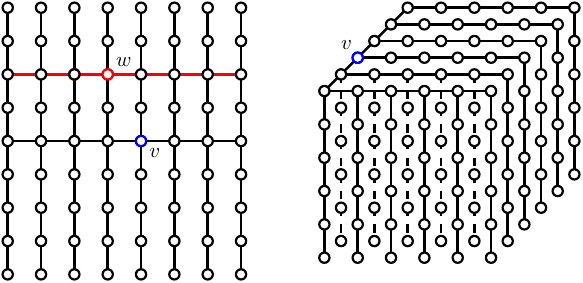} 
        \caption{(Left) The shortest path tree rooted at $v$ is shown by edges in black. The edges in red have lower capacity than edges in black. (Right)  A 3D grid graph.}
        \label{fig:grid}
    \end{center}
\end{figure}

Consider one specific vertex $v$ (in blue) and the shortest path tree rooted at $v$. We take the $n-1$ largest capacity values and assign them on the tree rooted at $v$, sorted in decreasing capacities in a round robin manner moving away from $v$. Specifically, the immediate neighbors of $v$ are given the highest capacities, those two hops away from $v$ in the tree are given the next highest capacities, and so on. This way, all edges in the tree rooted at $v$ (called $T(v)$) can be revealed if the optimal solution chooses $v$. For all other edges we assign the remaining capacity values randomly.

Now consider another vertex $w$. If $w$ does not have the same $y$-coordinate as $v$ (not on the same row as $y$), the shortest path rooted at $w$ has all edges on the same row as $w$ (shown as edges in red in \Cref{fig:grid} (left)), which have lower capacity than the edges on tree $T(v)$. Therefore, all edges other than the vertical edges on the same column as $w$ cannot be revealed. Thus, $w$ reveals at most $\sqrt{n}$ edges. If $w$ stays on the same row of $v$, $w$ may be able to reveal up to order $n$ edges but there are only $\sqrt{n}$ such vertices. Since all shortest paths look exactly the same topologically, any algorithm trying to guess the optimal tree $T(v)$ reveals in expectation $O(\sqrt{n})$ edges\footnote{Specifically, suppose an algorithm chooses a vertex $v$ with probability $p(v)$. An adversary would place the vertex on the row with the minimum total probability (which is no greater than $1/\sqrt{n}$) among all possible rows. }.

This example can be generalized to a $d$-dimensional cube of $n^{1/d}$ vertices in each dimension. See \Cref{fig:grid} (right) for an example for $d=3$. The tree rooted at a vertex $v$ takes all edges with the same coordinate of $x_1$ as $v$. For each vertex $u$ of the same $x_1$ coordinate as $v$, take the chain of vertices that share the same $x_2$ coordinate as $u$, and so on. Use the same allocation of capacities, the optimal tree can reveal $n-1$ edges while any algorithm without the knowledge of capacity distribution reveals $O(n^{1/d})$ edges. This leaves a gap of $O(n^{1-1/d})=O(n^{1-\eps})$, for any $\eps=1/d>0$, between the instance optimal solution and the best oblivious algorithm. 

\section{Open Problems}
\label{sec:open}

We conclude the paper with a couple of open problems.

\smallskip\noindent\textbf{The Adaptive Setting (with Stochastic Capacities). }
There is an intermediate model that is not addressed in this paper, which is the adaptive setting with stochastic capacities, where the ordering of capacities are a random permutation.
There are various notions of stochastic submodularity (e.g. \cite{asadpour2008stochastic,golovin2011adaptive}) for adaptive algorithms that admit good approximations via myopic algorithms.
Unfortunately, they do not seem to hold for this problem.

\smallskip\noindent\textbf{Adaptive Setting with Worst-Case Capacities.}
A notable open problem is to find algorithms that achieve the lower bounds (\Cref{thmdetiolb,thmraniolb}) for a general graph. One special parameter range is when $\beta=1$, i.e., there are $k$ vertices that, if chosen as vantage points, can reveal $OPT$ number of edges. The lower bounds for both deterministic and randomized algorithms suggest that $\alpha=\Omega(\sqrt{n/k})$. 
Now, can we find $o(n/k)$ vertices as vantage points that reveal at least $OPT$ edge capacities? Our results show that for a tree graph our algorithm with $O(\sqrt{n/k})$ vantage points suffice, and for a planar graph our algorithm with $O((n/k)^{2/3})$ vantage points suffice. We do not know any non-trivial bound of $o(n/k)$ (and $\Omega(\sqrt{n/k})$)  for a general graph nor a lower bound example that requires $\omega(\sqrt{n/k})$ vantage points. 
\section{Acknowledgments}
We thank an anonymous reviewer who caught an error in a claimed result that has now been omitted from this paper; we discuss and pose the question as the first open problem in \Cref{sec:open}.
We also thank Srinivas Narayana for discussions on capacity discovery.
\bibliography{bottleneck}
\newpage
\appendix
\part*{Appendix}
\section{Proof of hardness}
\label{app:iohardness}

\begin{theorem}
\label{thm:bd-nph}
    The decision version of $\BD$ is $\mathsf{NP}$-hard.
\end{theorem}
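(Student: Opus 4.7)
The plan is to reduce from Max-$k$-Coverage (equivalently Vertex Cover, which the introduction identifies as a special case). Given an instance with sets $S_1,\ldots,S_m \subseteq U$, integer $k$, and coverage threshold $t$, I will construct a $\BD$ instance with the same $k$ and a threshold $\tau$ on the number of revealed edges such that at least $\tau$ bottleneck edges can be revealed iff some $k$ sets in the given instance cover at least $t$ elements of $U$.

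The construction has three layers. First, a ``set-vertex'' $v_i$ is added for each set $S_i$, with every pair $v_i,v_j$ joined by a clique edge of \emph{tiny} capacity in $[\delta, 2\delta]$. Second, for each element $u \in U$, a gadget is built consisting of a gateway $g_u$ and $N$ pendants attached to $g_u$ by edges of \emph{medium} capacity in $[\epsilon, 2\epsilon]$, with $N$ polynomial and large in $m$ and $|U|$. Third, a gateway-connector edge $(v_i, g_u)$ of \emph{large} capacity in $[M, 2M]$ is added whenever $u \in S_i$. Here $\delta \ll \epsilon \ll M$ and all edge weights are essentially $1$ (perturbed slightly to ensure unique shortest paths). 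The central claim is that selecting $v_i$ as a vantage point reveals exactly (a)~its $O(m)$ incident clique/gateway edges, and (b)~all $N$ pendant edges in every gadget $u \in S_i$, but (c)~no pendant edge of any gadget $u \notin S_i$: indeed, the length-$2$ shortest path $v_i \to g_u \to y_u^\ell$ has bottleneck at the pendant (since $\epsilon < M$), giving~(b), while any shortest path from $v_i$ to a pendant of a gadget $u \notin S_i$ must traverse a clique edge whose tiny capacity preempts the pendant as the bottleneck, giving~(c). A straightforward check confirms that for $N$ a sufficiently large polynomial, picking any non-set-vertex (a gateway or a pendant) is strictly suboptimal whenever each $S_i$ has size at least~$2$ (which is WLOG for Max-$k$-Coverage, and automatic for Vertex Cover).

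Consequently, the $\BD$-optimum is realized by some $\{v_{i_1},\ldots,v_{i_k}\}$ and the number of revealed edges equals $N \cdot \card{\bigcup_j S_{i_j}} + O(km)$; choosing $\tau = Nt$ plus a suitable additive constant makes the $\BD$-decision equivalent to the Max-$k$-Coverage decision. The main obstacle is verifying that the capacity scheme is globally consistent: isolation property~(c) must hold on \emph{every} possible shortest path from \emph{every} set-vertex to every ``foreign'' pendant, and simultaneously the amplification factor $N$ must rule out \emph{all} alternative vantage choices (gateways, pendants, and combinations thereof); both are handled by the multi-level gap $\delta \ll \epsilon \ll M$ together with a polynomially large $N$. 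Once this consistency is established, the decision equivalence is immediate from counting revealed edges.
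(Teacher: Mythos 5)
There is a genuine gap, and it starts with a mismatch between your reduction and the problem actually being proven hard. Theorem~\ref{thm:bd-nph} concerns the decision version of $\BD$ in the stochastic non-adaptive setting: the instance consists only of $(G,w,k)$ (plus a threshold), the capacities are a uniformly random permutation of $[m]$, and the objective is the \emph{expected} number of revealed edges. Your construction crucially \emph{assigns} capacities in three separated bands $\delta \ll \eps \ll M$, but the reduction does not get to choose capacities --- they are not part of the input. Under a uniformly random ordering, the clean separation your three layers rely on (clique edges always preempting foreign pendants, connectors never being bottlenecks before pendants) holds only for some orderings, and the expected revelation counts no longer decompose as $N\cdot|\bigcup_j S_{i_j}| + O(km)$. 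The paper avoids this entirely by reducing Vertex Cover on the \emph{same} graph with threshold $|E|$: revealing all $|E|$ edges in expectation forces every edge to be revealed under every ordering, which forces the chosen set to be a vertex cover, since an uncovered edge is unrevealed whenever it happens to carry the maximum capacity (probability at least $1/m$). The special threshold $|E|$ is what makes the randomness irrelevant.

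Even if one grants you the full-information variant where capacities are part of the input, your structural claim that gateways and pendants are strictly suboptimal is false as stated. Consider selecting a gateway $g_u$. For any element $u'$ that co-occurs with $u$ in some set $S_j$, the shortest path to a pendant $y$ of the gadget of $u'$ is $g_u \to v_j \to g_{u'} \to y$, which uses two large-capacity connector edges followed by the medium-capacity pendant edge and touches no tiny clique edge; hence that pendant \emph{is} revealed. So $g_u$ reveals $N$ pendants for every element in the ``co-occurrence neighborhood'' of $u$, which can strictly exceed $N\cdot\max_i |S_i|$, breaking the claim that the optimum is realized by set-vertices and hence the equivalence with Max-$k$-Coverage. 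Fixing this would require restructuring the gadget (e.g., forcing gateway-to-gateway paths through tiny-capacity edges), but the first objection --- that you cannot prescribe capacities at all in this setting --- remains the fundamental obstacle.
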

\begin{proof}
    We give a reduction from $\VC$ to $\BD$.
    Let $G=(V,E), k$ be an instance of $\VC$.
    We transform it to the following instance of $\BD$: $G=(V,E,w),k$ where the weights $w$ are of the form $1 + 2^{-i}$ for edge $e_i$ so as to ensure unique shortest paths. We claim that $G$ has a vertex cover of size at most $k$ if, and only if, there exists a set of at most $k$ vantage points that reveal $|E|$ edges in expectation (and hence $|E|$ edges no matter what the capacities are).
    
    ($\implies$)
    If $G$ does indeed have a vertex cover $C$ of size $k$, then setting $S = C$ reveals every edge as a bottleneck (since edges incident to any $s \in S$ are assured to be revealed) no matter what the edge capacities are.
    
    ($\impliedby$)
    If, conversely, $G$ has no vertex cover of size $k$, then consider any $S$ of size $k$.
    Since $S$ is not a vertex cover, there is at least one edge $e$ whose endpoints are not in $S$.
    With probability at least $1/m$, $c(e)$ is the highest capacity among all the edges, and $e$ cannot be revealed unless one of its endpoints was selected. The expected number of revealed bottlenecks for $S$ is thus strictly less than $|E|$.
\end{proof}

Next, we prove hardness for the adaptive setting with worst-case capacities.

\begin{restatable}{theorem}{iohardness}
There does not exist a deterministic adaptive polynomial time algorithm $\mathcal{A}$ satisfying $\mathcal{A}(G,c)=OPT(G,c)$ for all graphs $G$ and all capacities $c$, unless $P=NP$.
\end{restatable}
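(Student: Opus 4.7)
The plan is to show that computing $OPT(G,c)$ is NP-hard; the theorem then follows because an $\mathcal{A}$ with $\mathcal{A}(G,c)=OPT(G,c)$ would, by definition, be an adaptive polynomial-time algorithm whose revealed-edge count equals $OPT(G,c)$, so the $k$-subset $S$ it selects achieves $|R(S)|=OPT(G,c)$, giving a polynomial-time procedure for an NP-hard optimization problem. Since $R(S)=\bigcup_{s\in S}R(\{s\})$, the value $OPT(G,c)$ is precisely the Max $k$-Coverage optimum for the family $\{R(\{v\}):v\in V\}$, and I reduce from Exact Cover by 3-Sets (X3C).

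Given an X3C instance $(U,\mathcal{F})$ with $|U|=n=3q$ and $|F_i|=3$ for each $F_i\in\mathcal{F}=\{F_1,\dots,F_m\}$, I would construct $G$ with a set-vertex $s_i$ per $F_i$, a pendant element-edge $e_u=(z_u,b_u)$ per $u\in U$, a hub $r$, edges $(s_i,z_u)$ for $u\in F_i$, $(s_i,r)$ for all $i$, $(r,z_u)$ for all $u$, direct edges $(s_i,s_j)$ for $i\ne j$, and a ``padding star'' of $M=\Theta(nm)$ private pendant edges at each $s_i$. All weights are unit with a negligible perturbation ensuring unique shortest paths; I assign distinct capacities respecting $c(r,z_u)<c(s_i,r)<c(s_i,s_j)<c(e_u)<c(s_i,z_u)$ (padding edges take the largest capacities). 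Then the shortest $s_i$-to-$b_u$ path is $s_i\to z_u\to b_u$ with bottleneck $e_u$ when $u\in F_i$, and $s_i\to r\to z_u\to b_u$ with bottleneck $(r,z_u)$ when $u\notin F_i$; the direct $(s_i,s_j)$ edge is the unique shortest path between $s$-vertices, preventing cross reveals through the hub.

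A per-vantage case analysis gives $|R(\{s_i\})|=M+m+n+3$ independent of $i$. For $S=\{s_{i_1},\dots,s_{i_k}\}$ with $k=q$, a union calculation yields
\[
|R(S)| \;=\; kM + \text{const}(n,k,m) + \Big|\bigcup_j F_{i_j}\Big| - \Big|\bigcap_j F_{i_j}\Big|,
\]
where the constant depends only on $n,k,m$. Because $M=\Theta(nm)$ dominates $|R(\{v\})|$ for any non-$s_i$ vertex $v$, the maximum is achieved by a subset of $\{s_1,\dots,s_m\}$. Using $|F_i|=3$ and $k=q$, we have $\big|\bigcup_jF_{i_j}\big|\le\sum_j|F_{i_j}|=3k=n$ with equality iff the sets are pairwise disjoint; pairwise disjointness for $k\ge 2$ also forces $\big|\bigcap_jF_{i_j}\big|=0$. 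So the expression $\big|\bigcup_jF_{i_j}\big|-\big|\bigcap_jF_{i_j}\big|$ attains its maximum value $n$ iff $\{F_{i_1},\dots,F_{i_k}\}$ is an exact cover, and consequently $OPT(G,c)=kM+\text{const}(n,k,m)+n$ iff the X3C instance is YES.

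The main obstacle is the capacity and weight bookkeeping: the orderings must make the bottleneck on every shortest path (direct, 2-hop via the hub, or via another $s_j$) land on a predictable edge so that the formula for $|R(S)|$ separates cleanly as stated, with no $S$-dependent slack. Once the casework is verified, any polynomial-time $\mathcal{A}$ satisfying $\mathcal{A}(G,c)=OPT(G,c)$ decides X3C in polynomial time, contradicting $P\ne NP$.
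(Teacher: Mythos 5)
Your approach is correct in outline but takes a genuinely different route from the paper's. The paper reduces from Vertex Cover using an \emph{adaptive adversary}: since $\mathcal{A}$ does not know $c$, the capacities are constructed on the fly in response to $\mathcal{A}$'s selections so that each new vantage point reveals only its edges to previously unselected vertices; then, whenever a $k$-vertex-cover exists one has $OPT(G,c)=m$ for \emph{every} $c$, so $\mathcal{A}$ must reveal all $m$ edges, which under the adversary's capacities forces its selected set to be a vertex cover. You instead fix $(G,c)$ up front and show that the \emph{offline} problem of computing $OPT(G,c)$ --- which, as you correctly note, is exactly Max $k$-Coverage over the sets $R(\set{v})$ --- is NP-hard, and then observe that simulating an always-optimal adaptive $\mathcal{A}$ with known $c$ computes $OPT(G,c)$ in polynomial time. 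Your conclusion is in fact stronger (the paper's adversarial capacities depend on $\mathcal{A}$, so its argument does not establish hardness of the offline problem), but the price is a far more delicate gadget: the paper's adversary tailors $c$ to the algorithm, whereas you must commit to capacities for which $|R(S)|$ decomposes cleanly for every $S$.

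I checked your casework: with the stated capacity ordering, $s_i$ reveals exactly its $M$ padding edges, $(s_i,r)$, its $m-1$ clique edges, the three edges $(s_i,z_u)$ and the three pendants $e_u$ for $u\in F_i$, and $(r,z_u)$ for each $u\notin F_i$; hence for an all-set-vertex $S$ one gets $|R(S)|=kM+4k+n+\binom{m}{2}-\binom{m-k}{2}+\big|\bigcup_j F_{i_j}\big|-\big|\bigcap_j F_{i_j}\big|$, and the last two terms sum to $n$ exactly when $\{F_{i_j}\}$ is an exact cover. Two repairs are needed. First, $M=\Theta(nm)$ is too small to force the optimum onto set-vertices: the clique alone contributes $\binom{m}{2}=\Theta(m^2)$ non-padding edges and $m$ may be $\Theta(n^3)$ in X3C, so a $k$-set trading an $s_i$ for another vertex could in principle recoup the lost $M$ from non-padding edges; taking $M$ to be any sufficiently large polynomial exceeding the total number of non-padding edges (e.g., $M=m^2+n^2$) closes the domination argument, since no vertex other than $s_i$ or one of its own pendants can ever reveal a padding edge of $s_i$. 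Second, the perturbation must route $s_i\to z_u$ through the hub when $u\notin F_i$ (e.g., weight $1-\eps$ on the $(r,z_u)$ edges and $1+\eps'$ on the clique edges); this is routine but must be pinned down, as you acknowledge. With these repairs the proof is complete.
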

\begin{proof}
    In the following, we show that if such an $\mathcal{A}$ exists, then we can solve vertex-cover in polynomial time. Assume we are given a connected graph $G$ on $n$ vertices and $m$ edges, and a number $k$, and asked if $G$ has a vertex cover of size exactly $k$. 
    
    The idea is to run $\mathcal{A}$ on $G$, starting from unknown capacities, but reveal the capacities adaptively to $\mathcal{A}$ in such a fashion that $\mathcal{A}$ is left with no choice but to solve the decision version of vertex cover. Let the current set of vantage points selected by $\mathcal{A}$ be $V_{i}=\{v_1,\cdots,v_i\}$.  Say in the next round $\mathcal{A}$ selects a vantage point $v$. For all the $d_{v,\overline{V_i}}$ edges between $v$ and a vertex not in $V_i$, we assign them the next highest $d_{v,\bar{V_i}}$ many capacities (from some arbitrarily selected alphabet) arbitrarily. This results in a consistent set of capacities: any edge capacities not yet revealed must be higher than those already revealed, and our strategy satisfies this invariant. 
    
    At the end of the execution of $\mathcal{A}$, we check if the set of vantage points $S$ selected by $\mathcal{A}$ is a vertex cover. If so, we answer YES to the decision version of vertex-cover; otherwise, we answer NO.
    
    Let $k^*$ be the size of a minimum vertex cover in $G$. If $k\geq k^*$, then $OPT(G,c)=m$ for any permutation of the capacities, in particular the capacities arrived at the end of $\mathcal{A}$'s execution, $c_{end}$. Therefore, in this case, $\mathcal{A}(G,c_{end})=m$ also, and that the set $S$ of vantage points selected must be a vertex cover, for if an edge $e$ is not covered by $S$, then its capacity is larger than all revealed edges, and could not have been revealed to $A$.
    
    On the other hand if $k<k^*$, then the set of vantage points selected by $\mathcal{A}$ cannot be a vertex-cover by definition of $k^*$, and so we answer correctly. Hence, the theorem is proved.
\end{proof}
\section{Omitted Proofs: $\BD$ in the Stochastic, Non-Adaptive Setting}
\label{app:non-adaptive}

\subsection{Submodularity of $f$}
\label{subsec:submodularity}
\begin{lemma}
\label{lem:nonad-submod}
$f(X)$ is monotone and submodular as a function of $X$:
\begin{enumerate}
    \item Monotonicity: $f(X\cup \set{w})\geq f(X)$;
    \item Submodularity: If $X\subseteq Y$, then $f(X\cup \set{w})-f(X)\geq f(Y\cup \set{w})-f(Y)$.
\end{enumerate}
\end{lemma}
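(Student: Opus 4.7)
The plan is to observe that the set-valued function $S \mapsto R(S)$ is, for each fixed capacity realization, a \emph{coverage function}: it is the union over $s \in S$ of the edge set $R(\{s\})$ consisting of edges that are bottleneck on some shortest path from $s$. This follows directly from \Cref{def:bn}, which defines the revealed edges for a vantage set $S$ as those that are bottleneck on $P(s,t)$ for some $s \in S$ and some $t \in V$. Once we have this observation, monotonicity and submodularity of $R$ are standard properties of coverage functions, and both properties transfer to $f(S) = \expect{|R(S)|}$ by linearity/monotonicity of expectation.

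For monotonicity, I would argue pointwise: fix any realization of the capacities. Then $R(X) \subseteq R(X \cup \{w\})$ because every bottleneck realized from some $s \in X$ is also realized from that same $s \in X \cup \{w\}$. Taking $|\cdot|$ and then expectation yields $f(X) \le f(X \cup \{w\})$.

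For submodularity, fix a capacity realization and let $X \subseteq Y$. Using the coverage-function identity and $R(X) \subseteq R(Y)$, we get
\begin{align*}
R(X \cup \{w\}) \setminus R(X) \;=\; R(\{w\}) \setminus R(X) \;\supseteq\; R(\{w\}) \setminus R(Y) \;=\; R(Y \cup \{w\}) \setminus R(Y).
\end{align*}
Taking cardinalities gives the pointwise inequality $|R(X \cup \{w\})| - |R(X)| \ge |R(Y \cup \{w\})| - |R(Y)|$. Taking expectations over the random capacity permutation gives $f(X \cup \{w\}) - f(X) \ge f(Y \cup \{w\}) - f(Y)$, as required.

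There is no real obstacle here; the only thing to be careful about is to justify the coverage-function identity by referring back to the definition of revealed bottleneck edges for a set $S$, and to perform the pointwise argument before taking expectations (rather than trying to manipulate the expected quantities directly).
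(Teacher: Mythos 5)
Your proof is correct, and it takes a somewhat different route from the paper's. The paper decomposes $f$ edge-by-edge via linearity of expectation and then manipulates probabilities of per-edge events: monotonicity follows from $\prob{X^w_e=1}\ge\prob{X_e=1}$, and submodularity from a chain of conditional-probability identities ($\prob{X_e=0\wedge w_e=1}=\prob{X_e=0\mid w_e=1}\prob{w_e=1}\ge\prob{Y_e=0\mid w_e=1}\prob{w_e=1}$). You instead fix the capacity realization, observe from \Cref{def:bn} that $R(S)=\bigcup_{s\in S}R(\set{s})$ is a coverage function, prove the monotonicity and submodularity inequalities pointwise as set identities, and only then take expectations. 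Both arguments are sound and of comparable length; yours is arguably more transparent and proves the slightly stronger fact that $|R(\cdot)|$ is monotone submodular for \emph{every} realization of the capacities (not merely in expectation), which also makes clear that the result does not depend on the random-permutation assumption. The paper's per-edge probabilistic formulation, on the other hand, sets up exactly the quantities $\prob{e\in R(S)}$ that the subsequent value-oracle computation needs. The one step you should make explicit when writing this up is the coverage identity itself — that the edges revealed by $S$ are exactly the union over $s\in S$ of the edges revealed by $s$ alone, with no extra information gained by combining queries — but this is immediate from the paper's definition of bottleneck edges from a set, as you note.
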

\begin{proof}
    We prove each claim separately.
    \begin{enumerate}
        \item Let $X_e$ (resp. $X^w_e$) be the indicator for $e$ being revealed from selecting $X$ (resp. $X \cup \set{w}$).
        Then, by linearity, $f(X)=\sum_e \E(X_e)=\sum_e \Pr[X_e=1]$.
        It is clear that $\Pr[X^w_e=1] \ge \Pr[X_e=1]$, and so $f$ is monotone.
        \item Let $X_e$ (resp. $X^w_e, Y_e, Y^w_e, (Y \setminus X)_e, w_e$) be the indicator for $e$ being revealed from selecting $X$ (resp. $X \cup \set{w}, Y, Y \cup \set{w}, Y \setminus X, \set{w}$).
        Then
        \begin{align*}
            f(X\cup \set{w}) - f(X) &= \sum_e \prob{X^w_e=1} - \prob{X_e=1}
            \tag{Linearity of expectation}
            \\
            &= \sum_e \prob{X_e=0 \wedge w_e=1}
            \tag{Law of total probability}
            \\
            &= \sum_e \prob{X_e=0 \mid w_e=1}\prob{w_e=1}
            \\
            &\ge \sum_e \prob{X_e=0 \wedge (Y \setminus X)_e=0 \mid w_e=1}\prob{w_e=1}
            \\
            &= \sum_e \prob{Y_e=0 \mid w_e=1}\prob{w_e=1}
            = f(Y\cup \set{w}) - f(Y),
        \end{align*}
        and so $f$ is submodular.
    \end{enumerate}
\end{proof}

By the results on maximizing submodular functions~\cite{Nemhauser1976-ku}, incrementally building up a set $S$ by adding the vantage point maximizing $f(S \cup \set{w})$ until $|S|=k$ yields a $(1-1/e)$ approximation algorithm.

\noness*
\begin{proof}
    In short, this follows from \Cref{lem:subpath}.
    
    Let $s' \in S'$.
    
    Querying $s'$ can only reveal bottleneck edges in $T_{s'}$, the shortest path tree rooted at $s'$.
    Consequently, if $P(s',u)$ does not go through $(v,u)$ and $P(s',v)$ does not go through $(u,v)$, then $\prob{e \in R(S)} = \prob{e \in R(S - s')}$.

    On the other hand, note that at most one of $P(s',u)$ goes through $(v,u)$ and $P(s'v)$ goes through $(u,v)$.
    Without loss of generality, say $P(s',u)$ goes through $(v,u)$.
    Since $s' \in S'$, there must be a vertex $s \in S$ where $s \neq s'$ such that $s \in P(s',u)$.
    The event that querying $s'$ reveals $(v,u)$ as a bottleneck is a subset of the event that querying $s$ reveals $(v,u)$ hence $\prob{e \in R(S)} = \prob{e \in R(S - s')}$.

    Finally, observe that a vertex being non-essential only depends on $G$ and $e$, and since $s'$ was chosen arbitrarily, $\prob{e \in R(S)} = \prob{e \in R(S \setminus S')}$.
\end{proof}

\tetreefast*
\begin{proof}
    $T_e$ being a tree follows immediately since, in our model, shortest paths are unique.
    That all its leaves are in $S$ follows from \Cref{def:ess}, the definition of essential vantage points.

    To compute $T_e$, we need only find $T_u$ (resp. $T_v$), the shortest path tree rooted at $u$ (resp. $v$), which we can compute in $\Ot(m)$ time using Dijkstra's algorithm.
    We can then find $T_e$ by running a breadth first search in $T_u$ and $T_v$, truncating at
    \begin{itemize}
        \item Subtrees which do not have $(u,v)$ as an ancestor;
        \item Vertices $s \in S$;
    \end{itemize}
    and finally iteratively removing leaves that are not in $S$.
    Since $P(s,u) = P(u,s)$, this produces $\bigcup_{s' \in S'} P(s',u) + e$.
    The running time follows.
\end{proof}

\bottontree*
\begin{proof}
    This follows, in short, from \Cref{lem:subpath}.
    \\
    $(\implies)$
    \\
    Suppose $e$ is a bottleneck edge for $S$.
    Let $s \in S$ be the vantage point that is closest to $e$, for which $e$ is a bottleneck edge on, say, $P(s,v)$.
    Since $s$ is the closest vantage point, it is essential and so $P(s,v) \subseteq T_e$.
    In particular, $s$ is a leaf with $c(e)$ being the smallest capacity on $P(v,s)$.
    \\
    $(\impliedby)$
    \\
    The converse is proved similarly.
    Suppose for a leaf $s \in S$, the capacity $c(e)$ is smallest in say $P_{T_e}(s,v)$.
    Then $e$ is a bottleneck edge of $P(s,v) = P_{T_e}(s,v)$.
\end{proof}

\begin{proof}[Proof of \Cref{thm:non-adaptive-main}]
    The correctness and approximation factor of \textsc{GreedyNonAdaptive} have been established earlier. 
    For running time,
    \textsc{GreedyNonAdaptive} takes $k$ rounds, and computing each step takes $nm$ inner-iterations, with each iteration computing $\prob{e \in R(S)}$ for some $e \in E$ and $S \subseteq S'$.
    Each such computation takes $\Ot(m)$ time to compute $T'_e$ (\Cref{obs:te-tree-fast}) and $\Ot(n^2)$ time to run \GoodLN. 
    \end{proof}

\section{Omitted Proofs: Adaptive setting with worst-case capacities}
\label{app:instance-optimal}

\subsection{Lower Bound Proofs in \Cref{iolower}}

\begin{proof}[Proof of \Cref{thmraniolb}]
We will first invoke Yao's principle \cite{yao77}, and try to lower bound the performance of a deterministic algorithm on the following distribution of inputs. In fact, our distribution inputs will all have the same underlying graph, but different set of capacities.

We will assume $k$ divides $\sqrt{n}$ for simplicity. Let $G$ be the disconnected graph with $\sqrt{kn}$ components $P_{i}$, $1 \leq i \leq \sqrt{kn}$, where each $P_i$ is a path on $\sqrt{n/k}$ vertices. Consider assigning capacities in the following randomized procedure. First, select a random set $S \subset \{1,\cdots,\sqrt{kn}\}$, $|S|=k$. For any $i \in S$, assign the edges in $P_i$ decreasing capacities, starting from the leftmost vertex. We will call these the ``good'' paths. For any $i \notin S$, assign the edges in $P_i$ a random permutation of integers in the range $[(i-1)\sqrt{n/k}, i\sqrt{n/k})]$. We will call these the ``bad'' paths.
The optimal adaptive algorithm will select the left endpoints of all $k$ good paths and thus reveal roughly $k \cdot \sqrt{n/k}= \sqrt{kn}$ edges.

Our hard distribution $\mathcal{D}$ will be the uniform distribution over all capacities obtained following the above procedure on $G$.
We now upper bound the expected number of edges revealed by any deterministic algorithm $\mathcal{A}$ over the distribution of inputs $\mathcal{D}$. Recall that $\mathcal{A}$ is allowed to select $\alpha k$ many vantage points. First, observe that if $\alpha \geq \sqrt{n/k}$, the theorem is trivially true since $\beta \geq 1$. So we will assume $\alpha < \sqrt{n/k}$, which means that $\alpha k < \sqrt{kn}$, the total number of paths. Thus, $\mathcal{A}$ does not have enough vantage points to cover all paths. 

We can assume that $\mathcal{A}$ does not select two vantage points on the same bad path, as this will only boost the chance of $\mathcal{A}$ hitting the good paths.  To argue this assumption, first, we boost $\mathcal{A}$ by giving it for free all edge capacities if it selects any (not necessarily the first) vantage point on a good path, so there is never a need to select another vantage point on a good path. If $\mathcal{A}$ selects a vantage point on a bad path, in expectation it reveals $O(\log (\sqrt{n/k}))=O(\log (n/k))$ many edge capacities. This is because on a random permutation of length $n$ there is a decreasing sequence of $O(\log n)$ record-setting edge capacities, in expectation. This follows from Lemma~\ref{lem:subpath}: the probability that the $i$th edge has a capacity lower than the previous $i-1$ edges is at most $1/(i-1)$, and the expected number of edges revealed can be shown to be upper bounded by $2(1+1/2+\cdots+1/(n/2)) = O(\log n)$. $\mathcal{A}$ does not have enough vantage points to cover all paths, and therefore it is always optimal to not place another vantage point on a bad path, and instead place it on an uncovered path.

We have that the expected number of good paths revealed by $\mathcal{A}$ is $\alpha k( k / \sqrt{kn})= \alpha k \sqrt{k/n}$, and thus the expected number of good edges revealed is $( \alpha k \sqrt{k/n})(\sqrt{n/k}) = \alpha k$. 

Now observe that the expected number of bad edges revealed is $O(\alpha k \log (n/k))$, at most $O(\log (n/k))$ per vantage point on bad path. Thus the expected number of total edges revealed is still $O(\alpha k \log (n/k))$, whereas $OPT \geq \sqrt{kn}$. Thus $\beta = \Omega(\sqrt{kn}/(\alpha k \log (n/k)))$, proving the theorem.
\end{proof}

\subsection{Upper Bound Proofs in \Cref{ioupper}}
\begin{proof}[Proof of \Cref{thm:planar}]
We use the $r$-division of a planar graph~\cite{Goodrich1992-qh, Federickson1987-pq}. Specifically, an $r$-division is a subdivision of $G$ into $O(n/r)$ pieces satisfying three conditions:
Each piece has $O(r)$ vertices;
Each piece has $O(\sqrt{r})$ boundary vertices (i.e., vertices shared with other pieces);
Each piece has $O(1)$ holes (faces of the piece that are not faces of $G$.
Such a $r$-division can be computed in $O(n)$ time~\cite{Goodrich1992-qh}. 

Suppose the optimal solution chooses $k$ vertices $S$, our algorithm basically chooses all vertices on the boundary of all pieces in an $r$-division, for a value $r$ to be decided later. There are $O(n/r) \cdot O(\sqrt{r})=O(n/\sqrt{r})$ boundary vertices. $\alpha=O(n/(k\sqrt{r}))$.

For any vertex $v$ in a piece $R$ without any vantage point from the optimal solution $S$, the shortest path from any vantage point $u\in S$ (which is outside the piece) to $v$ has to go through one of the boundary vertices of the piece $R$, say $w$. Thus, all edges that are revealed along the path $P(w,v)$, which is a subpath of $P(u, v)$, are also revealed by our algorithm (by $w$ in particular). Therefore the only edges our algorithm may have missed would be the edges of the shortest path tree rooted at $u$ that stay inside $R$, for each $u\in S$. There are at most $O(r)$ such edges for each vantage point $u\in S$ and a total of $O(rk)$ such edges for all vertices in $S$. 

Suppose that the optimal solution reveals $m$ edges. We will reveal at least $m-crk$ edges, for some constant $c$. If $m \geq 2crk$, $m-crk\geq m/2$. Thus $\beta\leq 2$ and $\alpha \beta \leq O(n/(k\sqrt{r}))$. 
If $m\leq 2crk$, we reveal at least $\Omega(\alpha k)$ edges since we select $\alpha k$ vertices. Thus $\beta \leq O((2crk)/(\alpha k))$. Or, $\alpha \beta \leq O(r)$. Now we take $r=(n/k)^{2/3}$ to balance the two terms. This gives an upper bound for $\alpha \beta =O((n/k)^{2/3})$.
\end{proof}

\end{document}